\documentclass{scrartcl}
\usepackage[numbers]{natbib}
\usepackage{fontenc}
\usepackage{shadethm}
\usepackage{amsthm}
\usepackage{float}
\usepackage{bm}
\usepackage{multicol,lipsum}
\usepackage{mathtools}
\usepackage{graphicx}
\usepackage{xcolor}
\usepackage{subfig}
\usepackage{stmaryrd}
\usepackage{mathrsfs}
\usepackage{amsmath}
\usepackage{amssymb}
\usepackage{wasysym}
\usepackage{sectsty}
\usepackage[hyperfootnotes=false]{hyperref}
\usepackage[a4paper, left=2.7cm, right=2.7cm, top=2.5cm]{geometry}
\usepackage{chngcntr}
\counterwithin*{equation}{section}

\usepackage{cleveref}
\DeclareMathAlphabet{\mathpzc}{OT1}{pzc}{m}{it}

\sectionfont{\normalsize\centering}
\subsectionfont{\footnotesize\centering}

\theoremstyle{plain}
\newtheorem{thm}{Theorem}[section] 

\theoremstyle{definition}
\newtheorem{defn}[thm]{Definition} 
\newtheorem{lem}[thm]{Lemma}
\newtheorem{prop}[thm]{Proposition}
\newtheorem{rem}[thm]{Remark}
\newtheorem{cor}[thm]{Corollary}

\def\XXint#1#2#3{{\setbox0=\hbox{$#1{#2#3}{\int}$ }
		\vcenter{\hbox{$#2#3$ }}\kern-.6\wd0}}

\usepackage[utf8]{inputenc}
\usepackage[german,english,russian]{babel}

\newcounter{MPequ}

\newcounter{AppA}
\newenvironment{AppA}
{\stepcounter{AppA}%
	\addtocounter{equation}{0}%
	\equation}
{\endequation}
\newcounter{AppB}
\newenvironment{AppB}
{\stepcounter{AppB}%
	\addtocounter{equation}{0}%
	\equation}
{\endequation}
\newcounter{AppC}
\newenvironment{AppC}
{\stepcounter{AppC}%
	\addtocounter{equation}{0}%
	\equation}
{\endequation}
\newcounter{AppD}

\newcounter{AppE}

\begin{document}\selectlanguage{english}
\begin{center}
\normalsize \textbf{\textsf{An analytic formula for surface currents generating prescribed plasma equilibrium fields}}
\end{center}
\begin{center}
 Wadim Gerner\footnote{\textit{E-mail address:} \href{mailto:wadim.gerner@edu.unige.it}{wadim.gerner@edu.unige.it}}
\end{center}
\begin{center}
{\footnotesize MaLGa Center, Department of Mathematics, Department of Excellence 2023-2027, University of Genoa, Via Dodecaneso 35, 16146 Genova, Italy}
\end{center}
{\small \textbf{Abstract:} 
Given a plasma domain $P\subset\mathbb{R}^3$, a plasma equilibrium field $B$ on $P$ and a coil winding surface $\Sigma$ surrounding $P$, we provide an analytic formula whose output is a surface current distribution $j$ on $\Sigma$ such that $\operatorname{BS}(j)+\operatorname{BS}(\operatorname{curl}(B))=B$ in $P$, i.e. the combination of the plasma current magnetic field and the surface current magnetic field exactly produce the full plasma equilibrium field. We use this new formula to provide a theoretical explanation of the empirical phenomenon that currents tend to attain their maximal strength at points which correspond to points of highest normal magnetic curvature on the plasma boundary. Some discussions regarding aspects of numerical approximations are also included.
\newline
\newline
{\small \textit{Keywords}: Coil winding surface model, Stellarator, Magnetic field approximation}
\begin{multicols}{2}
\section{Introduction}
\label{Intro}
In stellarator devices, cf. \cite{Spi58}, the hot fusion plasma is confined within a plasma region by means of a magnetic field which is generated by complex coil configurations. Traditionally the plasma region, plasma equilibrium field and coil configurations are obtained by means of a two step procedure:
\begin{enumerate}
	\item First, a plasma equilibrium field is sought together with a plasma region supporting it. This is achieved by utilising plasma equilibrium codes such as VMEC or DESC, cf. \cite{HirshmanWhitson83},\cite{DK20}. During this step the plasma is optimised for desirable confinement properties.
	\item In a second step one seeks coil configurations which are capable to support the plasma equilibrium field that was found in the first step.
\end{enumerate}
To be more precise, a plasma equilibrium field is a magnetic field $B$ which satisfies 
\begin{gather}
	\nonumber
	B\times \operatorname{curl}(B)=\nabla p\text{, }\operatorname{div}(B)=0\text{ in }P
	\\
	\label{S1EExtra}
	\text{ and }B,\operatorname{curl}(B)\parallel \partial P
\end{gather}
where $P\subset \mathbb{R}^3$ is the plasma region (and naturally assumed throughout to be topologically a solid torus). According to Amp\'{e}re's law, the plasma current $J$ is given by $J=\operatorname{curl}(B)$. The induced magnetic field is then given by the Biot-Savart law as
\begin{gather}
	\nonumber
	\operatorname{BS}_P(J)(x)=\frac{1}{4\pi}\int_{P}J(y)\times \frac{x-y}{|x-y|^3}d^3y.
\end{gather}
One can then define the magnetic field $$B_T:=B-\operatorname{BS}_P(J)$$ and it induces a vacuum field, i.e. $$\operatorname{div}(B_T)=0=\operatorname{curl}(B_T).$$ One then seeks to find a coil configuration and corresponding electric current densities supported on these coils which reproduce the desired target field $B_T$, so that the combination of the coil induced field and plasma induced field produce together the desired equilibrium field $B$.

In the following we work entirely in the coil winding surface model, cf. \cite{M87}, which assumes that the coil current $j$ is supported on a toroidal surface $T^2\cong \Sigma\subset\mathbb{R}^3$ surrounding the plasma region at a positive distance. In this context one is therefore looking for a suitable $\Sigma$ and surface current distribution $j$ on $\Sigma$ such that
\begin{gather}
\nonumber
\operatorname{BS}_{\Sigma}(j)(x)=\frac{1}{4\pi}\int_{\Sigma}j(y)\times \frac{x-y}{|x-y|^3}dS(y)
\\
\nonumber
=B_T(x)\text{ for all }x\in P.	
\end{gather}
In general, if $B_T$ is an arbitrary prescribed vacuum field on $P$, i.e. satisfying $\operatorname{div}(B_T)=0=\operatorname{curl}(B_T)$, then there does not exist such a current distribution $j$. From a theoretical point of view, \cite[Theorem 1.1]{G24}, one can at least approximate any such target field to arbitrary precision. More specifically this means that for a given coil winding surface (CWS) $\Sigma$ and every $\epsilon>0$ there is a current $j_{\epsilon}$ supported on $\Sigma$ with $$\|\operatorname{BS}_{\Sigma}(j_{\epsilon})-B_T\|_{L^2(P)}\leq \epsilon.$$
From a practical point of view, in order to obtain such current distributions, one may employ a modified REGCOIL-procedure, cf. \cite{L17} for the original REGCOIL reference. The original REGCOIL procedure proceeds as follows: Given a CWS $\Sigma$ and regularising parameter $\lambda>0$ one considers the minimisation problem
\begin{gather}
	\nonumber
	\|\mathcal{N}\cdot \operatorname{BS}_{\Sigma}(j)-\mathcal{N}\cdot B_T\|^2_{L^2(\partial P)}+\lambda\|j\|^2_{L^2(\Sigma)}
	\\
	\nonumber
	\rightarrowtail \operatorname{minmise}
\end{gather}
which admit unique minimisers. One then uses the magnetic fields induced by these minimising currents to approximate $B_T$. These approximations improve with decreasing value of $\lambda$ but usually at the cost of increased current field line complexity. If one instead considers the following modified minimisation problem
\begin{gather}
	\label{S1E1}
	\|\operatorname{BS}_{\Sigma}(j)-B_T\|^2_{L^2(P)}+\lambda \|j\|^2_{L^2(\Sigma)}\rightarrowtail \operatorname{minimise}
\end{gather}
then these modified problems also admit unique minimisers which we denote by $j_{\lambda}$. It follows then from \cite[Corollary 4.3]{G24} that $$\|\operatorname{BS}_{\Sigma}(j_{\lambda})-B_T\|_{L^2(P)}\rightarrow 0\text{ as }\lambda \searrow 0.$$ Despite this theoretical result, the corresponding current field lines of the $j_{\lambda}$ are generally becoming more complex as one decreases $\lambda$. To address the coil complexity one may attempt to identify features of the original plasma equilibrium field $B$ which influence coil complexity\slash current field line complexity, cf. \cite{KLM24}. Since there is no known explicit formula for the current distribution $j$ which exactly reproduces the given target field on $P$, these findings are usually empiric and work with the approximating current $j_{\lambda}$ (or with minimisers of related objective functions, e.g. the original REGCOIL functional).

In a recent letter, cf. \cite{RodSengArXiv}, the special situation was considered in which the plasma equilibrium field $B$ was itself assumed to be a vacuum field and the coil winding surface has been assumed to coincide with the plasma boundary. In that case there is a known formula for the current distribution $j$ on $\Sigma=\partial P$ which exactly reproduces $B$ according to the virtual-casing principle \cite{Hanson15}. More precisely, under these special circumstances, $\operatorname{curl}(B)=0=\operatorname{div}(B)$, $B\parallel \partial P$, we have the identity
\begin{gather}
	\label{S1E2}
	\operatorname{BS}_{\partial P}(B\times\mathcal{N})=B
\end{gather} 
where $\mathcal{N}$ is the outward unit normal to $\partial P$. We note that since $\operatorname{curl}(B)=0$, $j:=B\times \mathcal{N}$ is div-free and thus indeed a valid electric current distribution. It has then been discussed in \cite{RodSengArXiv} that certain properties of $B$ influence the complexity of the field lines of $j$ and that these properties of $B$ appear to remain of relevance in more practical situations where the CWS has a positive distance to the plasma boundary.
\newline
\newline
The goal of the present work is to provide for a given CWS $\Sigma$ (of positive distance to the plasma) an explicit analytic formula for a current distribution $j$ on $\Sigma$ which precisely supports a given plasma equilibrium field $B$ (which does not need to be a vacuum field) on a plasma region $P$. In addition, the constructed current distribution also supports a compatible vacuum field within the region between the plasma boundary and the CWS.

The remainder of the manuscript is structured as follows: In \Cref{Section2} we recall some necessary mathematical notions before we clearly state the analytic formula. In \Cref{Section3} we discuss some numerical aspects of our analytic formula. In particular, we discuss some approaches which can be used to approximate the analytic quantities in question. In \Cref{Section4} we discuss some properties of the dynamics of the resulting currents. In \Cref{Section5} we discuss implications of our findings and compare our results with previously known results in the literature. In \Cref{Section6} we summarise the main findings of the paper. The appendix contains mathematically rigorous proofs of the analytic formula and some additional results which are presented throughout the main body of this manuscript.
\section{An analytic formula for the current distribution}
\label{Section2}
In the next subsections we introduce necessary notions which are required to state the analytic formula.
\subsection{Compatible vacuum fields}
Suppose that $B$ is a plasma equilibrium field, i.e. it satisfies (\ref{S1EExtra}) on some plasma domain $P$. For a given CWS $\Sigma$ we then require the existence of a vacuum field $V$ within the region between $\Sigma$ and $\partial P$, i.e. $\operatorname{curl}(V)=0=\operatorname{div}(V)$, which is compatible with the plasma equilibrium field $B$, i.e. we require $V|_{\partial P}=B|_{\partial P}$. If such a vacuum field does not exist, then $B$ is incompatible with the existence of a corresponding vacuum equilibrium field and thus incompatible with a steady operation of the designed stellarator. It is therefore of interest to understand under what conditions such vacuum fields exist. We state here a known mathematical fact, \cite[Theorem 3.1]{EP12},\cite[Theorem D.1]{Ger26CoilGeometry}
\begin{thm}[{\cite[Theorem 3.1]{EP12}}, Existence of compatible vacuum fields]
	\label{S2T1}
	Let $B$ be a plasma equilibrium field, i.e. it solves (\ref{S1EExtra}), within a plasma region $P$. Then there exists an open subset $U\subset\mathbb{R}^3$ such that $\overline{P}\subset U$ and there exists a unique vacuum field $V$, i.e. $\operatorname{div}(V)=0=\operatorname{curl}(V)$, on $U\setminus P$ satisfying $V|_{\partial P}=B|_{\partial P}$.
\end{thm}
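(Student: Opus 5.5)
We treat the statement as a non-characteristic Cauchy problem for the div–curl system, posed on the analytic surface $\partial P$. We assume the data are real analytic: $\partial P$ is a real analytic surface and $B$ is real analytic up to $\partial P$, as holds for fields produced by equilibrium codes and as is implicit in \cite{EP12}. Without analyticity the Cauchy problem for an elliptic system is ill-posed, so this assumption is essential. We split the argument into uniqueness and existence.

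\emph{Reduction to a scalar problem.} Locally, a vacuum field is a gradient, $V=\nabla\phi$, with $\Delta\phi=0$. The condition $V|_{\partial P}=B|_{\partial P}$ then splits into two parts. The tangential part requires $\nabla_{\partial P}\phi=B|_{\partial P}$, which is a tangent field because $B\parallel\partial P$. The normal part requires $\partial_{\mathcal N}\phi=0$. Since $\partial P$ is a torus, $B|_{\partial P}$ need not be a surface gradient. We therefore use a multivalued potential: $\phi=\phi_0+c_1\theta_1+c_2\theta_2$, with the cohomology classes fixed by the periods of $B$ along the two generating cycles of $\partial P$.

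The key compatibility check is that $B|_{\partial P}$, viewed as a $1$-form on $\partial P$, is closed. Its surface curl equals $\mathcal N\cdot\operatorname{curl}(B)$, and this vanishes because $\operatorname{curl}(B)\parallel\partial P$ by (\ref{S1EExtra}). Hence the surface data $\phi|_{\partial P}=f$ exist, locally unique up to a constant and globally a closed analytic $1$-form. The problem becomes the Cauchy problem
\begin{gather}
\nonumber
\Delta\phi=0,\quad \phi|_{\partial P}=f,\quad \partial_{\mathcal N}\phi|_{\partial P}=0 .
\end{gather}

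\emph{Existence.} Since $\partial P$ is non-characteristic for $\Delta$, the Cauchy--Kovalevskaya theorem gives a local analytic solution near each point of $\partial P$. Holmgren's theorem makes these local solutions unique, so they agree on overlaps, or differ by constants on the potential level, which leaves $\nabla\phi$ unchanged. Gluing them with a compactness argument over $\partial P$ yields a neighbourhood $U$ of $\overline P$ and a field $V=\nabla\phi$ that is curl- and div-free on $U\setminus P$ and satisfies $V|_{\partial P}=B|_{\partial P}$. Alternatively, one can apply Cauchy--Kovalevskaya directly to the first-order system $\operatorname{curl}V=0$, $\operatorname{div}V=0$. This requires checking that the constraint $\mathcal N\cdot\operatorname{curl}$ is propagated, and that check is exactly the closedness computed above.

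\emph{Uniqueness.} Let $W$ be the difference of two solutions. Then $W$ is curl- and div-free near $\partial P$ and $W|_{\partial P}=0$. Locally $W=\nabla\psi$ with harmonic $\psi$, constant Dirichlet data and zero Neumann data, so Holmgren's theorem (or unique continuation) gives $\psi$ constant and $W=0$. Uniqueness propagates to all of the connected set $U\setminus P$ by analyticity of harmonic fields, after shrinking $U$ so that each component of $U\setminus\overline P$ touches $\partial P$.

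\emph{Main obstacle.} The main difficulty is the global gluing. We must obtain a uniform radius of existence around the compact surface $\partial P$ from the Cauchy--Kovalevskaya estimates. We must also handle the multivaluedness of the potential on the toroidal shell, by working with $V$ itself rather than $\phi$, or with the first-order system. The uniqueness step from Holmgren's theorem is what makes the gluing consistent.
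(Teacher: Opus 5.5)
Your argument is correct and follows essentially the route behind the cited result. The paper does not prove the theorem itself but invokes the Cauchy--Kovalevskaya theorem for the div--curl system from \cite{EP12}, whose formal content is the power-series determination of the field from its boundary trace sketched in Section~\ref{Section3}; your ingredients (analytic Cauchy data, the compatibility condition $\mathcal{N}\cdot\operatorname{curl}(B)=0$ making $B|_{\partial P}$ surface-curl-free, local Cauchy--Kovalevskaya, and Holmgren/unique continuation for gluing and uniqueness) are exactly those, with the scalar-potential reduction a harmless variant of the first-order formulation you also describe. You are also right that real analyticity of $\partial P$ and $B|_{\partial P}$ is indispensable: the statement as printed omits it, though the paper imposes it later in Corollary~\ref{AppCC3}.
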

We emphasise that while every plasma equilibrium field supports a compatible vacuum field, the size of $U$ may depend on $B|_{\partial P}$. In addition, we remark that if the pressure is constant along $\partial P$ (as is customary in stellarator designs), then the vector field $\widetilde{B}:=B$ on $P$ and $\widetilde{B}:=V$ on $U\setminus P$ provides a weak solution to the plasma equilibrium equations, i.e. it solves $\widetilde{B}\times \operatorname{curl}(\widetilde{B})=\nabla \widetilde{p}$ where $\widetilde{p}:=p$ in $P$ and $\widetilde{p}:=p|_{\partial P}$ on $U\setminus P$. To simplify notation we write in the following $B$ instead of $\widetilde{B}$ to refer to the extended plasma equilibrium field. Notice that one should aim to select a CWS $\Sigma$ such that $\Sigma\subset U\setminus \overline{P}$ because that ensures that there exists a suitable vacuum region between $\Sigma$ and $P$ which is compatible with our plasma equilibrium field $B$.
\subsection{The double layer potential}
Given a bounded smooth domain $\Omega\subset \mathbb{R}^3$, the transpose of the double layer potential, denoted $w^{\operatorname{Tr}}_{\Omega}$ is the following operator which takes as input a function $\phi$ on the boundary of $\partial \Omega$ and outputs a function $w^{\operatorname{Tr}}_{\Omega}(\phi)$ according to the following formula
\begin{gather}
	\label{S2E1}
	w^{\operatorname{Tr}}_{\Omega}(\phi)(x):=\frac{\int_{\partial\Omega}\phi(y)\mathcal{N}(x)\cdot \frac{x-y}{|x-y|^3}dS(y)}{4\pi}
\end{gather}
where $\mathcal{N}$ denotes the outward unit normal on $\partial\Omega=\Sigma$. To state our analytic formula we will need to invoke the following result.
\begin{thm}[{\cite[Lemma 4.2]{Gerner26KernelImage}}]
	\label{S2T2}
	The operator $\frac{\operatorname{Id}}{2}+w^{\operatorname{Tr}}_{\Omega}$ is invertible.
\end{thm}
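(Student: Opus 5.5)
The plan is to use Fredholm theory for compact perturbations of the identity. The operator $w^{\operatorname{Tr}}_{\Omega}$ has a weakly singular kernel on the smooth closed surface $\partial\Omega$. Indeed, for $x,y\in\partial\Omega$ we have $|\mathcal{N}(x)\cdot(x-y)|\le C|x-y|^2$, so the kernel is $O(|x-y|^{-1})$. Hence $w^{\operatorname{Tr}}_{\Omega}$ is compact on $L^2(\partial\Omega)$ and on $C^{0}(\partial\Omega)$. By the Fredholm alternative, $\frac{\operatorname{Id}}{2}+w^{\operatorname{Tr}}_{\Omega}$ is invertible as soon as it is injective. The whole argument therefore reduces to proving that the kernel is trivial. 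Regularity is not an issue: any $L^2$ solution $\phi$ of $\frac{\phi}{2}+w^{\operatorname{Tr}}_{\Omega}(\phi)=0$ is continuous, and in fact smooth, by the usual bootstrapping with weakly singular operators.

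For injectivity, suppose $\frac{\phi}{2}+w^{\operatorname{Tr}}_{\Omega}(\phi)=0$. I introduce the single layer potential
\[
u(x):=\frac{1}{4\pi}\int_{\partial\Omega}\frac{\phi(y)}{|x-y|}\,dS(y),\qquad x\in\mathbb{R}^3 .
\]
It is harmonic off $\partial\Omega$, continuous across $\partial\Omega$, and decays like $O(|x|^{-1})$, with gradient $O(|x|^{-2})$. Its normal derivative satisfies the classical jump relations
\[
\partial_{\mathcal{N}}u_{\pm}=-w^{\operatorname{Tr}}_{\Omega}(\phi)\mp\frac{\phi}{2},
\]
where $+$ denotes the limit from outside and $-$ the limit from inside. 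The sign of the $w^{\operatorname{Tr}}$ term is fixed by the convention in (\ref{S2E1}), since $\nabla_x\frac{1}{|x-y|}=-\frac{x-y}{|x-y|^3}$. With this convention, the assumption on $\phi$ says exactly that the interior Neumann datum $\partial_{\mathcal{N}}u_-=-\bigl(w^{\operatorname{Tr}}_{\Omega}(\phi)-\frac{\phi}{2}\bigr)$ or the exterior one vanishes. I will check the signs carefully, and this is the main point where care is needed.

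I expect the vanishing datum to be the exterior one, $\partial_{\mathcal{N}}u_+=-\bigl(w^{\operatorname{Tr}}_{\Omega}(\phi)+\frac{\phi}{2}\bigr)=0$. Then $u$ solves the exterior Neumann problem with zero data and decays at infinity. Green's identity on $B_R\setminus\overline{\Omega}$ with $R\to\infty$ gives $\int_{\mathbb{R}^3\setminus\Omega}|\nabla u|^2=0$. Since $u\to0$ at infinity, $u\equiv0$ outside $\Omega$; here I use that $\mathbb{R}^3\setminus\overline{\Omega}$ has only one unbounded component, and each bounded component is handled as an interior Neumann problem. Continuity of $u$ across $\partial\Omega$ then gives $u=0$ on $\partial\Omega$. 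Uniqueness for the interior Dirichlet problem yields $u\equiv0$ in $\Omega$. Consequently $\partial_{\mathcal{N}}u_-=0$, and the jump relation $\phi=\partial_{\mathcal{N}}u_--\partial_{\mathcal{N}}u_+$ gives $\phi=0$.

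The main obstacle is the topology. If $\partial\Omega$ is disconnected, the exterior complement may have bounded components, on which Neumann data determine $u$ only up to constants. In the paper's setting, $\Omega$ is a solid torus with connected boundary $\Sigma$, so the complement is connected and unbounded and the argument above applies verbatim. In the general case, I would argue on each bounded exterior component that $u$ is constant. I would then use the interior Dirichlet problem together with continuity to propagate these constants, and conclude via the charge neutrality $\int\phi=0$ on the relevant boundary pieces that the constants vanish. If instead the sign check shows that the interior Neumann datum is the one that vanishes, I would run the symmetric argument: $u$ is constant on each component of $\Omega$, and one uses the exterior Dirichlet problem with decay to conclude. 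In that case the obstacle is excluding nonzero constants, which is the same issue as the known one-dimensional kernel of $-\frac{\operatorname{Id}}{2}+w^{\operatorname{Tr}}$; the invertibility claimed in Theorem \ref{S2T2} signals that the correct sign is the exterior one.
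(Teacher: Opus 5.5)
Your argument is correct in the setting of the statement ($\partial\Omega=\Sigma$ is a torus, hence connected), but it does not follow the paper's route.

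\textbf{Comparison with the paper.} The paper gives no proof of its own. It imports the result from \cite[Lemma 4.2]{Gerner26KernelImage}. Its later use (proof of \Cref{S2L6}, Step 1 of the proof of \Cref{S2T7}, Section 3.2) shows two things. First, the operator is taken on $\mathcal{D}$, the Neumann traces of harmonic $L^2$-gradients on $\Omega$, i.e.\ mean-zero data in $W^{-\frac12,2}(\Sigma)$. Second, the inverse is the Neumann series $\sum_{k\geq0}(\frac{\operatorname{Id}}{2}-w^{\operatorname{Tr}}_{\Omega})^k$, which rests on the identity $\mathcal{N}\cdot T(\nabla g)=(\frac{\operatorname{Id}}{2}-w^{\operatorname{Tr}}_{\Omega})(\mathcal{N}\cdot\nabla g)$ and the $L^2$-contractivity of $T$.

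Your Riesz--Fredholm argument, combined with uniqueness for the exterior Neumann and interior Dirichlet problems, is self-contained classical potential theory. It gives invertibility on all of $C^0(\Sigma)$ and $L^2(\Sigma)$, not only on mean-zero data, but it says nothing quantitative about the inverse. The contraction route delivers the Neumann series itself with a geometric rate. The paper uses this in the proof of \Cref{S2T7} and for the numerical scheme of Section 3.2.

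To reach the paper's version on $\mathcal{D}$ from yours, add two facts:
\begin{itemize}
\item $w^{\operatorname{Tr}}_{\Omega}$ is smoothing of order one, so it is compact on $W^{-\frac12,2}(\Sigma)$ and its kernel elements are smooth.
\item Gauss's solid-angle identity gives $\int_\Sigma w^{\operatorname{Tr}}_{\Omega}(\phi)\,dS=\frac12\int_\Sigma\phi\,dS$. Hence $\frac{\operatorname{Id}}{2}+w^{\operatorname{Tr}}_{\Omega}$ preserves the mean and maps the mean-zero subspace onto itself.
\end{itemize}

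\textbf{Two corrections.}

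First, commit to the sign. Your displayed jump relation is right, and it already gives $\partial_{\mathcal{N}}u_+=-(\frac{\operatorname{Id}}{2}+w^{\operatorname{Tr}}_{\Omega})\phi$. As a check, take the unit sphere and $\phi\equiv1$: then $u=1$ inside, $u=|x|^{-1}$ outside and $w^{\operatorname{Tr}}_{\Omega}(1)=\frac12$, so $\partial_{\mathcal{N}}u_+=-1$ and $\partial_{\mathcal{N}}u_-=0$. Fixing the sign by appealing to the theorem's own conclusion is circular, so that fallback branch should be deleted.

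Second, your sketch for disconnected boundaries is false, so connectedness of $\Sigma$ is essential rather than a technicality. Take the shell $\Omega=\{1<|x|<2\}$ and the density $\phi=2$ on $\{|x|=1\}$, $\phi=-\frac12$ on $\{|x|=2\}$. It has $\int\phi\,dS=0$, and its single layer potential is
\begin{itemize}
\item $u=1$ for $|x|<1$,
\item $u=\frac{2}{|x|}-1$ in $\Omega$,
\item $u=0$ for $|x|>2$.
\end{itemize}
Both exterior Neumann traces vanish, so $\phi\neq0$ lies in the kernel. No charge-neutrality argument can remove the free constants on bounded complementary components.
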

To avoid technicalities we do not specify the precise domain and range of the operator and instead refer the mathematically interested reader to \cite[Lemma 4.2]{Gerner26KernelImage} for the details, see also \Cref{AppB}. We emphasise that $w^{\operatorname{Tr}}_{\Omega}$ is linear and thus so is the inverse of $\frac{\operatorname{Id}}{2}+w^{\operatorname{Tr}}_{\Omega}$.
\subsection{Harmonic Neumann fields and the kernel of the Biot-Savart operator}
Given a CWS $\Sigma$ it is well-known, \cite{Li88}, that $\mathbb{R}^3\setminus \Sigma$ consists of one bounded component and one unbounded component. In the following we always denote by $\Omega$ the bounded component and by $\widetilde{\Omega}$ the unbounded component. In particular, $\partial \Omega=\Sigma=\partial\widetilde{\Omega}$. Before we come to a characterisation of the kernel of the Biot-Savart operator we make the following definition.
\begin{defn}[Harmonic Neumann fields on exterior domains]
	\label{S2D3}
	Let $\Sigma$ be a CWS and let $\widetilde{\Omega}\subset \mathbb{R}^3$ be the unbounded component of $\mathbb{R}^3\setminus \Sigma$. We define the space of harmonic Neumann fields, denoted by $\mathcal{H}_N(\widetilde{\Omega})$, as follows
	\begin{gather}
		\nonumber
		\mathcal{H}_N(\widetilde{\Omega}):=\left\{\Gamma \in H^1(\widetilde{\Omega},\mathbb{R}^3)\mid \operatorname{curl}(\Gamma)=0=\operatorname{div}(\Gamma)\text{, }
		\right.
		\\
		\label{S2E2}
		\left. \Gamma\parallel \partial \widetilde{\Omega}\text{, }|\Gamma(x)|\rightarrow 0\text{ as }|x|\rightarrow \infty\right\}.
	\end{gather}
\end{defn}
The space $\mathcal{H}_N(\widetilde{\Omega})$ is a vector space and its dimension is related to the topology of $\Sigma$. We have the following result, see \Cref{AppA} for a proof.
\begin{prop}[Dimension of the space of harmonic Neumann fields]
	\label{S2P4}
	Let $\Sigma\subset\mathbb{R}^3$ be a (toroidal) CWS, then $\operatorname{dim}\left(\mathcal{H}_N(\widetilde{\Omega})\right)=1$ and for any fixed (purely) poloidal loop $\sigma_p\subset \Sigma$ there exists a unique element $\widetilde{\Gamma}_p\in \mathcal{H}_N(\widetilde{\Omega})$ satisfying $\int_{\sigma_p}\widetilde{\Gamma}_p=1$.
\end{prop}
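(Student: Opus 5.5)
We will identify $\mathcal{H}_N(\widetilde{\Omega})$ with a de Rham cohomology group of $\widetilde{\Omega}$ through the period map. Concretely, define the linear map
\[
\Phi:\mathcal{H}_N(\widetilde{\Omega})\to\mathbb{R},\qquad \Phi(\Gamma):=\int_{\sigma_p}\Gamma .
\]
We will show that $\Phi$ is an isomorphism. Both claims of \Cref{S2P4} then follow: $\operatorname{dim}=1$, and $\widetilde{\Gamma}_p:=\Phi^{-1}(1)$ is the unique element with poloidal circulation $1$.

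\emph{Topology.} By Alexander duality, $H_1(\widetilde{\Omega};\mathbb{R})\cong H_1(\Omega;\mathbb{R})\cong\mathbb{R}$ (using \cite{Li88}, $\Omega$ is a solid torus). The poloidal loop $\sigma_p$, pushed slightly into $\widetilde{\Omega}$, generates $H_1(\widetilde{\Omega};\mathbb{R})$, because it links the core circle of $\Omega$ once. The toroidal loop, by contrast, bounds a surface in $\widetilde{\Omega}$ and is therefore trivial there.

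\emph{Injectivity.} Let $\Gamma\in\mathcal{H}_N(\widetilde{\Omega})$ with $\Phi(\Gamma)=0$.
\begin{itemize}
\item Since $\operatorname{curl}\Gamma=0$, Stokes' theorem shows that circulations of $\Gamma$ depend only on homology classes in $\widetilde{\Omega}$. All circulations therefore vanish.
\item Hence $\Gamma=\nabla f$ for a single-valued harmonic $f$ on $\widetilde{\Omega}$ with $\partial_{\mathcal{N}}f=0$ on $\Sigma$.
\item Since $|\nabla f|\to0$ and $\nabla f$ is harmonic in the exterior of a ball, $f$ tends to a constant $c$ at infinity with $f-c=O(1/|x|)$ and $\nabla f=O(1/|x|^2)$.
\item Integrating by parts on $\widetilde{\Omega}\cap B_R$ and letting $R\to\infty$ gives $\int_{\widetilde{\Omega}}|\nabla f|^2=0$, so $\Gamma=0$.
\end{itemize}
The main technical step is justifying these decay rates from $H^1$ and pointwise decay. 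We would do this via the spherical harmonic expansion of the harmonic function $f$ outside a large ball. Because $f$ is single-valued there, the degree-zero term $a/|x|$ has flux $\int\partial_r f=0$ through large spheres (it equals the zero flux through $\Sigma$), so $a=0$ and in fact $f-c=O(|x|^{-2})$.

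\emph{Surjectivity.} This is the harder direction. We need some nonzero element with nonzero poloidal circulation. We will construct one explicitly.
\begin{enumerate}
\item Take a smooth closed $1$-form, that is, a curl-free field $A$, on $\overline{\widetilde{\Omega}}$ with $\int_{\sigma_p}A=1$ and compact support. One choice is $A=\nabla\theta$ near $\Sigma$, where $\theta$ is a multivalued angle around the core curve of $\Omega$, cut off away from the core. A cleaner choice is $A=\nabla\chi$ locally, built from a tubular neighbourhood coordinate, multiplied by a cutoff and corrected so that it stays closed. Alternatively, take $A=\operatorname{BS}(\text{unit current on the core curve of }\Omega)$ restricted to $\widetilde{\Omega}$, which is curl-free and divergence-free there, has circulation $1$ around $\sigma_p$, and decays like $|x|^{-2}$.
\item Correct $A$ to satisfy the tangency condition by setting $\Gamma:=A-\nabla u$, where $u$ solves the exterior Neumann problem
\[
\Delta u=\operatorname{div}A=0 \text{ in }\widetilde{\Omega},\qquad \partial_{\mathcal{N}}u=A\cdot\mathcal{N}\text{ on }\Sigma,\qquad u\to0 \text{ at infinity.}
\]
\item The exterior Neumann problem is uniquely solvable with no compatibility condition, since constants are excluded by the decay requirement. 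Its solution can be represented by a single layer potential, invertibility following from Fredholm theory in the same spirit as \Cref{S2T2}. The solution satisfies $\nabla u=O(|x|^{-2})$.
\item Then $\Gamma$ is curl-free, divergence-free, tangent to $\Sigma$, decays at infinity, and lies in $H^1$ by elliptic regularity up to the smooth boundary together with the decay rate.
\item Since $\nabla u$ has zero circulation, $\Phi(\Gamma)=1$, which proves surjectivity.
\end{enumerate}
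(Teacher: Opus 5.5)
Your proposal is correct. For the upper bound $\dim\mathcal{H}_N(\widetilde{\Omega})\leq 1$ it takes a genuinely different route from the paper; for existence it essentially follows the paper.

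\textbf{Upper bound.} The paper argues through the Biot--Savart operator. The virtual-casing identity (\ref{AppAExtraExtraExtra}) puts $\widetilde{\Gamma}_0\times\mathcal{N}$ in the kernel of $\operatorname{BS}_{\Sigma}$ for every $\widetilde{\Gamma}_0\in\mathcal{H}_N(\widetilde{\Omega})$. The one-dimensionality of that kernel is imported from \cite[Theorem 5.1]{G24}, so $\widetilde{\Gamma}_0-\alpha\widetilde{\Gamma}$ has vanishing tangential and normal trace on $\Sigma$. A uniqueness theorem for div- and curl-free fields, \cite[Theorem 3.4.4]{S95}, then finishes. You instead prove injectivity of the period map $\Phi$ directly. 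Since $\sigma_p$ generates $H_1(\widetilde{\Omega};\mathbb{R})$, a field with zero poloidal circulation is a gradient $\nabla f$. The energy identity, with the decay $f-c=O(|x|^{-2})$ from the exterior harmonic expansion and the zero flux through $\Sigma$, then forces $\nabla f=0$.

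Your argument is self-contained and elementary. It makes \Cref{S2P4} independent of the fairly deep kernel result of \cite{G24}, and it extends verbatim to higher genus, giving $\dim\mathcal{H}_N(\widetilde{\Omega})=b_1(\widetilde{\Omega})$. The paper's argument, in exchange, exhibits $\mathcal{H}_N(\widetilde{\Omega})$ directly as the source of the Biot--Savart kernel, which is what \Cref{S2T5} and the main formula exploit.

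\textbf{Existence.} You do what the paper does: take a Biot--Savart field of a source inside $\Omega$ and make it tangent by subtracting the gradient of an exterior Neumann solution. The only difference is the source. You use a filamentary current on a core curve, whose circulation is the linking number by Amp\`ere's law. The paper uses $\operatorname{BS}_{\Omega}(\Gamma)$ with $\Gamma\in\mathcal{H}_N(\Omega)$, whose circulation is the flux of $\Gamma$ through a cross-section.

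\textbf{Two small things to tidy.}
\begin{enumerate}
\item \cite{Li88} only gives the two components. For an arbitrary embedded torus, $\Omega$ need not be a solid torus (it can be a knot exterior). You do not need this: by Mayer--Vietoris in $S^3$, $H_1(\Sigma)\cong H_1(\Omega)\oplus H_1(\widetilde{\Omega})$. Hence a loop that is nontrivial on $\Sigma$ and null-homologous in $\Omega$ generates $H_1(\widetilde{\Omega};\mathbb{R})$. Instead of a core curve you may take any loop in $\Omega$ that links $\sigma_p$ once.
\item Your compactly supported choices of $A$ are not divergence-free, and multiplying $\nabla\theta$ by a cutoff destroys closedness. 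For those choices the correction would have to solve $\Delta u=\operatorname{div}A$. Simply keep the Biot--Savart choice, for which your steps 2--5 are exactly right. Since its flux through $\Sigma$ vanishes, both $A$ and $\nabla u$ are in fact $O(|x|^{-3})$, which recovers (\ref{AppAE4}).
\end{enumerate}
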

The relevance of the space of harmonic Neumann fields in the context of coil design is the following result, cf. \Cref{AppA} for a proof.
\begin{thm}[Characterisation of the kernel of the Biot-Savart operator]
	\label{S2T5}
	Let $\Sigma\subset\mathbb{R}^3$ be a (toroidal) CWS, then the vector space $$\operatorname{Ker}(\operatorname{BS}_{\Sigma}):=\left\{j\mid \operatorname{BS}_{\Sigma}(j)=0\text{ in }\Omega\right\}$$
	is one-dimensional and is spanned by $j_0:=\mathcal{N}\times \widetilde{\Gamma}_p$ where $\widetilde{\Gamma}_p$ is as in \Cref{S2P4}.
\end{thm}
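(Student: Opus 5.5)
The plan is to identify $\operatorname{BS}_{\Sigma}(j)$ in the exterior domain $\widetilde{\Omega}$ with a harmonic Neumann field, using the jump relations of the Biot--Savart operator across $\Sigma$. Then \Cref{S2P4} does the rest. Throughout, $j$ ranges over tangential, divergence-free currents on $\Sigma$ of the regularity class fixed in \Cref{AppA}, and $\mathcal{N}$ is the normal on $\Sigma$ pointing into $\widetilde{\Omega}$. The two facts I will use about $W:=\operatorname{BS}_{\Sigma}(j)$ are standard:
\begin{itemize}
\item[(i)] $W$ is divergence- and curl-free on $\mathbb{R}^3\setminus\Sigma$, and $|W(x)|=O(|x|^{-2})$ as $|x|\to\infty$, since $j$ has compact support.
\item[(ii)] Across $\Sigma$ the normal component of $W$ is continuous, while the tangential components jump by $j\times\mathcal{N}$. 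Equivalently, $j=\mathcal{N}\times (W_{\mathrm{out}}-W_{\mathrm{in}})$.
\end{itemize}

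\textbf{Step 1 ($j_0$ lies in the kernel).} Define $G:=\widetilde{\Gamma}_p$ on $\widetilde{\Omega}$ and $G:=0$ on $\Omega$.
\begin{itemize}
\item Since $\widetilde{\Gamma}_p\parallel\Sigma$, the normal component of $G$ is continuous across $\Sigma$ (it is zero on both sides). Hence $\operatorname{div}(G)=0$ in the distributional sense on $\mathbb{R}^3$.
\item The distributional curl of $G$ is the surface current $\mathcal{N}\times\widetilde{\Gamma}_p=j_0$, supported on $\Sigma$. As a distributional curl, $j_0$ is automatically divergence-free and tangential.
\item $\widetilde{\Gamma}_p$ decays at infinity; being a harmonic field in an exterior domain, it in fact decays like $|x|^{-2}$. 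The Helmholtz/Biot--Savart representation for decaying divergence-free fields therefore gives $G=\operatorname{BS}_{\Sigma}(\operatorname{curl} G)=\operatorname{BS}_{\Sigma}(j_0)$.
\end{itemize}
Restricting to $\Omega$ yields $\operatorname{BS}_{\Sigma}(j_0)=0$ there. Moreover $j_0\neq 0$: $\widetilde{\Gamma}_p$ is tangential and nonzero, since its circulation along $\sigma_p$ equals $1$. So $\operatorname{Ker}(\operatorname{BS}_{\Sigma})$ is at least one-dimensional.

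\textbf{Step 2 (every kernel element is a multiple of $j_0$).} Let $j$ satisfy $\operatorname{BS}_{\Sigma}(j)=0$ in $\Omega$, and set $W:=\operatorname{BS}_{\Sigma}(j)|_{\widetilde{\Omega}}$.
\begin{itemize}
\item By (i), $W$ is divergence- and curl-free, lies in $H^1(\widetilde{\Omega})$, and decays at infinity.
\item The interior trace of $W$ vanishes because $W=0$ on $\Omega$. By continuity of the normal component in (ii), $W\cdot\mathcal{N}=0$ on $\Sigma$ from outside, i.e. $W\parallel\partial\widetilde{\Omega}$.
\item Hence $W\in\mathcal{H}_N(\widetilde{\Omega})$, and by \Cref{S2P4} we get $W=c\,\widetilde{\Gamma}_p$ with $c=\int_{\sigma_p}W$.
\item The tangential jump relation (ii), with zero interior trace, gives $j=\mathcal{N}\times W_{\mathrm{out}}=c\,\mathcal{N}\times\widetilde{\Gamma}_p=c\,j_0$.
\end{itemize}
Together with Step 1, this proves that the kernel is exactly $\operatorname{span}\{j_0\}$.

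\textbf{Main obstacle.} The delicate point is analytic rather than structural. One must justify the jump relations and the $H^1$ regularity of $W$ up to $\Sigma$ for the class of currents allowed in \Cref{AppA}, and verify the decay needed both for membership in $\mathcal{H}_N(\widetilde{\Omega})$ and for the representation formula in Step 1. I would handle this by working with $j\in L^2$ (or $H^{-1/2}$) tangential divergence-free fields and using the standard mapping properties of single-layer potentials, since $\operatorname{BS}_{\Sigma}(j)=\operatorname{curl}$ of the vector single-layer potential of $j$. With the orientation convention fixed as above, the signs in the jump formula then come out consistently.
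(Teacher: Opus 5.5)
Your proof is correct. Step 1 is essentially the paper's first route: extending $\widetilde{\Gamma}_p$ by zero into $\Omega$ and identifying it with $\operatorname{BS}_{\Sigma}(\mathcal{N}\times\widetilde{\Gamma}_p)$ via the distributional curl and Liouville is the virtual-casing identity (\ref{AppAExtraExtraExtra}) specialised to this field. Decay to zero already suffices for Liouville; the $|x|^{-2}$ rate is not needed.

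The real difference is the upper bound. The paper's proof of \Cref{S2T5} simply cites \cite[Theorem 5.1 \& Remark C.2]{G24}. In the proof of \Cref{S2P4} the logic then runs opposite to yours. There, the inclusion $\mathcal{N}\times\mathcal{H}_N(\widetilde{\Omega})\subset\operatorname{Ker}(\operatorname{BS}_{\Sigma})$, together with injectivity of $\Gamma\mapsto\Gamma\times\mathcal{N}$ on $\mathcal{H}_N(\widetilde{\Omega})$, converts the known $\dim\operatorname{Ker}(\operatorname{BS}_{\Sigma})=1$ into $\dim\mathcal{H}_N(\widetilde{\Omega})\leq 1$. You go the other way. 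The jump relations show that $j\mapsto\operatorname{BS}_{\Sigma}(j)|_{\widetilde{\Omega}}$ injects the kernel into $\mathcal{H}_N(\widetilde{\Omega})$, with left inverse $W\mapsto\mathcal{N}\times W|_{\Sigma}$, and you take $\dim\mathcal{H}_N(\widetilde{\Omega})=1$ from \Cref{S2P4}.

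This is legitimate and gives the cleaner structural statement $\operatorname{Ker}(\operatorname{BS}_{\Sigma})=\mathcal{N}\times\mathcal{H}_N(\widetilde{\Omega})$, which holds for a winding surface of any genus. However, because the upper bound in \Cref{S2P4} is itself derived from \cite{G24}, your proof is not independent of that reference. It becomes independent if you count $\dim\mathcal{H}_N(\widetilde{\Omega})\leq 1$ topologically:
\begin{enumerate}
\item A poloidal loop generates $H_1(\widetilde{\Omega})$ by Alexander duality.
\item Hence any $\Gamma\in\mathcal{H}_N(\widetilde{\Omega})$ with $\int_{\sigma_p}\Gamma=0$ equals $\nabla\phi$, where $\phi$ is harmonic, $\mathcal{N}\cdot\nabla\phi=0$ on $\Sigma$, and $\nabla\phi\in L^2(\widetilde{\Omega})$.
\item Then $\phi-c=\mathcal{O}(|x|^{-1})$ and $\nabla\phi=\mathcal{O}(|x|^{-2})$ at infinity.
\item Integrating $|\nabla\phi|^2$ by parts over $\widetilde{\Omega}\cap B_R(0)$ and letting $R\to\infty$ gives $\nabla\phi=0$.
\end{enumerate}
With this and Step 1 of the proof of \Cref{S2P4}, your route proves both \Cref{S2P4} and \Cref{S2T5} without \cite{G24}.

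One minor correction: $W\in H^1(\widetilde{\Omega})$ does not follow from (i). For $j\in L^2(\Sigma)$ one in general only gets $H^{\frac{1}{2}}$ regularity near $\Sigma$. For the $W^{-\frac{1}{2},2}$-completion used in the paper one gets merely $L^2$, with traces understood in the $H(\operatorname{div})$ and $H(\operatorname{curl})$ sense. The $H^1$ membership is recovered a posteriori, since an $L^2$ field that is div- and curl-free with vanishing normal trace on a smooth boundary is smooth up to that boundary.
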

\subsection{The analytic formula}
Before we state our formula we require one final ingredient.
\begin{lem}
	\label{S2L6}
	Let $B$ be a plasma equilibrium field on a plasma domain $P$ and let $U$ be a neighbourhood of $\overline{P}$ on which $B$ admits a compatible vacuum extension, cf. \Cref{S2T1}. Then for every CWS $\Sigma$ with $\Sigma\subset U\setminus \overline{P}$ there exists a solution $f$ to the following boundary value problem (BVP)
	\begin{gather}
		\nonumber
		\Delta f=0\text{ in }\Omega\text{, }
		\\
		\label{S2E3}
		\mathcal{N}\cdot \nabla f=\left(\left(\frac{\operatorname{Id}}{2}+w^{\operatorname{Tr}}_{\Omega}\right)^{-1}-\operatorname{Id}\right)(\mathcal{N}\cdot B)
	\end{gather}
	where $\Omega$ denotes the finite region enclosed by $\Sigma$.
\end{lem}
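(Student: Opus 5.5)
The plan is to reduce the statement to the classical solvability criterion for the interior Neumann problem. On the bounded smooth domain $\Omega$ (with $\partial\Omega=\Sigma$ connected), the problem $\Delta f=0$ in $\Omega$, $\mathcal{N}\cdot\nabla f=g$ on $\Sigma$ has a solution (unique up to an additive constant) if and only if $\int_\Sigma g\,dS=0$, provided $g$ lies in a suitable trace space, e.g. $H^{-1/2}(\Sigma)$ or a H\"older space. So it suffices to check that
\begin{gather}
\nonumber
g:=\psi-\mathcal{N}\cdot B\text{, where }\psi:=\left(\frac{\operatorname{Id}}{2}+w^{\operatorname{Tr}}_{\Omega}\right)^{-1}(\mathcal{N}\cdot B),
\end{gather}
is well-defined with enough regularity and has zero mean on $\Sigma$. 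Here $\psi$ exists by \Cref{S2T2}.

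\textbf{Step 1: $\int_\Sigma \mathcal{N}\cdot B\,dS=0$.} Since $\Sigma\subset U\setminus\overline{P}$, the extended field $B$ (equal to $B$ on $P$ and to $V$ on $\Omega\setminus P$) is divergence-free on each piece. Its normal component is continuous across $\partial P$ because $V|_{\partial P}=B|_{\partial P}$; in fact both normal components vanish since $B\parallel\partial P$. Hence $B$ is weakly divergence-free in $\Omega$. Alternatively, apply the divergence theorem to $V$ on $\Omega\setminus\overline{P}$, using $\int_{\partial P}\mathcal{N}\cdot V=\int_{\partial P}\mathcal{N}\cdot B=0$. Either way, $\int_\Sigma\mathcal{N}\cdot B\,dS=0$.

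\textbf{Step 2: $\int_\Sigma\psi\,dS=0$.} Integrate the identity $\frac{\psi}{2}+w^{\operatorname{Tr}}_{\Omega}(\psi)=\mathcal{N}\cdot B$ over $\Sigma$ and exchange the order of integration in the $w^{\operatorname{Tr}}$ term:
\begin{gather}
\nonumber
\int_\Sigma w^{\operatorname{Tr}}_{\Omega}(\psi)\,dS=\int_\Sigma\psi(y)\left(\frac{1}{4\pi}\int_\Sigma\mathcal{N}(x)\cdot\frac{x-y}{|x-y|^3}dS(x)\right)dS(y).
\end{gather}
The inner integral is Gauss' solid-angle integral evaluated at a point $y$ on the smooth surface, which equals $2\pi$. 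Hence $\int_\Sigma w^{\operatorname{Tr}}_{\Omega}(\psi)=\frac12\int_\Sigma\psi$, and the integrated identity becomes $\int_\Sigma\psi=\int_\Sigma\mathcal{N}\cdot B=0$. Together with Step 1 this gives $\int_\Sigma g\,dS=0$.

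\textbf{Step 3: regularity and conclusion.} The kernel $\mathcal{N}(x)\cdot(x-y)/|x-y|^3$ is only $O(|x-y|^{-1})$ on a smooth surface, so $w^{\operatorname{Tr}}_{\Omega}$ is weakly singular and compact. This justifies the use of Fubini in Step 2 and gives $\psi\in L^2(\Sigma)$, indeed $\psi$ is as smooth as $\mathcal{N}\cdot B$ by the smoothing property of $w^{\operatorname{Tr}}_{\Omega}$. Therefore $g$ is admissible Neumann data, and the Neumann problem yields $f$.

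I expect the main obstacle to be Step 2, namely evaluating the Gauss integral at boundary points and justifying Fubini for the weakly singular kernel. I would handle it by excising a small geodesic disc around $y$ and letting its radius tend to zero. I would also check that the function spaces in \Cref{S2T2} are compatible with the trace space required by the Neumann theory.
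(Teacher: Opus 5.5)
Your argument is correct and has the same skeleton as the paper's proof. You get $\int_\Sigma\mathcal{N}\cdot B\,dS=0$ from $\operatorname{div}(B)=0$, then show the datum $\psi-\mathcal{N}\cdot B$ is admissible, then invoke solvability of the interior Neumann problem. The difference is in how admissibility of $\psi$ is obtained. The paper works in the space $\mathcal{D}$ of normal traces $\mathcal{N}\cdot\nabla h$ of $H^1$ harmonic gradients and observes $\mathcal{N}\cdot B\in\mathcal{D}$. It then cites that $\frac{\operatorname{Id}}{2}+w^{\operatorname{Tr}}_{\Omega}$ is a linear isomorphism of $\mathcal{D}$ onto itself, so $\psi-\mathcal{N}\cdot B\in\mathcal{D}$ with no computation. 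You instead prove by hand the mechanism behind that invariance: the operator preserves the mean, by the Gauss solid-angle identity. Equivalently, $w_{\Omega}(1)=\frac12$ on $\Sigma$, and via the duality recorded in Section 3.2 this also covers $W^{-\frac12,2}$ densities without needing Fubini. Since $\Omega$ is connected, $\mathcal{D}$ is exactly the mean-zero part of $W^{-\frac12,2}(\Sigma)$, so your Step 2 is a self-contained proof of the $\mathcal{D}\to\mathcal{D}$ part of the cited lemma. The paper's route is shorter and avoids function-space bookkeeping, because invertibility is asserted precisely on $\mathcal{D}$. Read that way, your Step 2 is automatic, since $\psi\in\mathcal{D}$ already has zero mean. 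Your route is what is needed if the inverse is taken on a larger space such as $L^2(\Sigma)$. There invertibility holds classically, because $\frac{\operatorname{Id}}{2}+w^{\operatorname{Tr}}_{\Omega}$ is, up to sign, the exterior Neumann trace of the single layer potential. Mean preservation then puts the $L^2$ solution in $\mathcal{D}$, and injectivity on $\mathcal{D}$ shows it coincides with the paper's $\psi$. Your bootstrap additionally gives smoothness of $\psi$, and hence of $f$ up to $\Sigma$, which the paper does not address.
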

The main finding of the present manuscript is the following result, see \Cref{AppB} for a proof.
\begin{thm}[Analytic current formula]
	\label{S2T7}
	Let $B$ be a plasma equilibrium field, i.e. it satisfies (\ref{S1EExtra}), on some plasma region $P$. Let $U$ be an open set on which $B$ admits a compatible vacuum extension, cf. \Cref{S2T1}, and let $\Sigma \subset U\setminus \overline{P}$ be any fixed CWS. Denote by $f$ any solution to the BVP (\ref{S2E3}), then for every $\alpha\in \mathbb{R}$ we have
	\begin{gather}
		\nonumber
		\operatorname{BS}_{\Sigma}(B\times \mathcal{N}+\nabla f\times \mathcal{N}-\alpha \widetilde{\Gamma}_p\times \mathcal{N})
		\\
		\nonumber
		=B-\operatorname{BS}_{P}(J)\text{ in }\Omega
	\end{gather}
	where $J=\operatorname{curl}(B)$ is the plasma current and $\widetilde{\Gamma}_p$ is as in \Cref{S2P4}. In addition, the field lines of the corresponding current are as poloidal as possible for the choice $\alpha=\int_{\sigma_p}B$.
\end{thm}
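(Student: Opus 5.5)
The plan is to derive a Stratton--Chu type integral representation for the extended field $B$ on $\Omega$. In it, the surface terms on $\Sigma$ are a Biot--Savart term of $B\times\mathcal{N}$ plus the gradient of a single layer potential. The correction $\nabla f\times\mathcal{N}$ is then chosen to cancel exactly that gradient term. Throughout, $B$ denotes the extended field, equal to $B$ in $P$ and to $V$ in $\Omega\setminus P$ (\Cref{S2T1}), and
\[
S(\sigma)(x):=\frac{1}{4\pi}\int_\Sigma\frac{\sigma(y)}{|x-y|}dS(y)
\]
denotes the single layer potential on $\Sigma$.

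\textbf{Step 1 (representation formula).} For a sufficiently regular field $W$ on $\Omega$ and $x\in\Omega$, the identity $\Delta\frac{1}{4\pi|x-y|}=-\delta_x$ together with $-\Delta=\operatorname{curl}\operatorname{curl}-\nabla\operatorname{div}$ and integration by parts gives
\begin{gather*}
W(x)=\operatorname{BS}_\Omega(\operatorname{curl}W)(x)-\nabla\frac{1}{4\pi}\int_\Omega\frac{\operatorname{div}W(y)}{|x-y|}d^3y+\operatorname{BS}_\Sigma(W\times\mathcal{N})(x)+\nabla S(\mathcal{N}\cdot W)(x),
\end{gather*}
up to sign conventions that I will fix carefully. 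I apply this to the extended $B$.
\begin{itemize}
\item Since $V|_{\partial P}=B|_{\partial P}$, the extended field is continuous across $\partial P$. Hence its distributional divergence vanishes and its distributional curl is $J\,\mathbf{1}_P$, with no surface current on $\partial P$. Because $B,J\parallel\partial P$, no surface contributions arise there either.
\item To justify this step, I would split $\Omega$ into $P$ and $\Omega\setminus\overline{P}$, apply the formula on each piece, and observe that the boundary terms on $\partial P$ cancel by continuity of $B$.
\end{itemize}
The outcome is
\[
B=\operatorname{BS}_P(J)+\operatorname{BS}_\Sigma(B\times\mathcal{N})+\nabla S(\mathcal{N}\cdot B)\quad\text{in }\Omega .
\]

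\textbf{Step 2 (the correction $\nabla f$).} I apply the same formula to the curl- and divergence-free field $\nabla f$ on $\Omega$, which gives
\[
\operatorname{BS}_\Sigma(\nabla f\times\mathcal{N})=\nabla f-\nabla S(\mathcal{N}\cdot\nabla f).
\]
Adding this to Step 1 yields
\[
\operatorname{BS}_\Sigma\bigl((B+\nabla f)\times\mathcal{N}\bigr)=B-\operatorname{BS}_P(J)+\nabla\bigl(f-S(g)\bigr),\qquad g:=\mathcal{N}\cdot B+\mathcal{N}\cdot\nabla f .
\]
It therefore suffices to show that $f-S(g)$ is constant in $\Omega$. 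Both functions are harmonic in $\Omega$, so it is enough to show that their interior normal derivatives agree.

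By (\ref{S2E3}) we have $g=(\frac{\operatorname{Id}}{2}+w^{\operatorname{Tr}}_\Omega)^{-1}(\mathcal{N}\cdot B)$. The classical jump relation for the normal derivative of the single layer potential gives $\mathcal{N}\cdot\nabla S(g)|_{\text{int}}=\pm(\frac{\operatorname{Id}}{2}+w^{\operatorname{Tr}}_\Omega)g$, with the sign dictated by the convention in (\ref{S2E1}). Hence $\mathcal{N}\cdot\nabla S(g)=\mathcal{N}\cdot B$.

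On the other hand $\mathcal{N}\cdot\nabla f=g-\mathcal{N}\cdot B$. Matching these two boundary values is exactly where the signs in Step 1 must be calibrated against the jump relation. If a sign flips, I would redo the bookkeeping with $S$ replaced by $-S$; the formula for $f$ in (\ref{S2E3}) strongly suggests the consistent choice.

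\textbf{Step 3 (kernel and poloidality).} The term $\alpha\widetilde{\Gamma}_p\times\mathcal{N}=-\alpha j_0$ lies in $\operatorname{Ker}(\operatorname{BS}_\Sigma)$ by \Cref{S2T5}, so subtracting it changes nothing in $\Omega$. For the poloidality claim, I would compute the net toroidal current of $j_\alpha$, namely the flux of $j_\alpha$ through a poloidal cut. This equals $\int_{\sigma_p}(B+\nabla f-\alpha\widetilde{\Gamma}_p)$, since $v\times\mathcal{N}$ crossing a curve measures the tangential component of $v$ along it. The gradient term integrates to zero and $\int_{\sigma_p}\widetilde{\Gamma}_p=1$ by \Cref{S2P4}. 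Hence $\alpha=\int_{\sigma_p}B$ is the unique choice that removes the toroidal winding.

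The main obstacle is Step 1: the rigorous justification of the representation formula across the interface $\partial P$, where $B$ is only continuous and not smooth. A secondary difficulty is the consistent tracking of signs in the jump relation for $w^{\operatorname{Tr}}_\Omega$.
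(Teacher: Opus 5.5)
Your strategy is sound and differs from the paper's, but Step 2 has a genuine error at the point where the argument must close. In the paper's convention (outward $\mathcal{N}$), the interior normal trace of the single layer potential is not $\pm(\frac{\operatorname{Id}}{2}+w^{\operatorname{Tr}}_\Omega)(g)$. The correct relations are
\[
\mathcal{N}\cdot\nabla S(g)\big|_{\mathrm{int}}=\Bigl(\frac{\operatorname{Id}}{2}-w^{\operatorname{Tr}}_\Omega\Bigr)(g),\qquad \mathcal{N}\cdot\nabla S(g)\big|_{\mathrm{ext}}=-\Bigl(\frac{\operatorname{Id}}{2}+w^{\operatorname{Tr}}_\Omega\Bigr)(g).
\]
You can check this on the unit sphere with $g\equiv 1$: $S(1)\equiv 1$ inside, $S(1)=1/|x|$ outside, and $w^{\operatorname{Tr}}_\Omega(1)=\tfrac12$. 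It is the same fact the paper uses as $\mathcal{N}\cdot T(\nabla h)=(\frac{\operatorname{Id}}{2}-w^{\operatorname{Tr}}_\Omega)(\mathcal{N}\cdot\nabla h)$, because $T(\nabla h)=\nabla S(\mathcal{N}\cdot\nabla h)$ for harmonic $h$.

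With your relation you get $\mathcal{N}\cdot\nabla S(g)=\mathcal{N}\cdot B$, while you need $\mathcal{N}\cdot\nabla f=g-\mathcal{N}\cdot B$; these do not agree in general. Your fallback $S\to-S$ only flips the sign of $\mathcal{N}\cdot B$, so it does not help either. There is also nothing to recalibrate in Step 1. Your representation formula already has the correct signs: integrating $\int_\Omega B\cdot\frac{x-y}{|x-y|^3}d^3y$ by parts turns the paper's virtual-casing identity into exactly your formula.

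The fix is the identity $\frac{\operatorname{Id}}{2}-w^{\operatorname{Tr}}_\Omega=\operatorname{Id}-(\frac{\operatorname{Id}}{2}+w^{\operatorname{Tr}}_\Omega)$. It gives $\mathcal{N}\cdot\nabla S(g)|_{\mathrm{int}}=g-(\frac{\operatorname{Id}}{2}+w^{\operatorname{Tr}}_\Omega)(g)=g-\mathcal{N}\cdot B=\mathcal{N}\cdot\nabla f$. Hence $f-S(g)$ is constant on the connected set $\Omega$, and Step 2 closes.

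Once repaired, your route is a direct verification, whereas the paper's is a construction. In the paper's notation, you check by comparing normal traces that $\nabla f$ solves the fixed-point equation $\nabla f=T(\nabla h+\nabla f)$. The paper instead iterates the virtual-casing identity to build $\nabla f=\sum_{k\geq 1}T^k(\nabla h)$. To do so it needs:
\begin{itemize}
\item the contraction property of $T$ on $\mathcal{H}_{\operatorname{ex}}(\Omega)$;
\item $W^{-\frac12,2}$-continuity of the traces and of $\operatorname{BS}_\Sigma$;
\item the Neumann series for $(\frac{\operatorname{Id}}{2}+w^{\operatorname{Tr}}_\Omega)^{-1}$, to identify $\mathcal{N}\cdot\nabla f$.
\end{itemize}
Your argument is shorter and needs only the existence of $f$ (\Cref{S2L6}) plus the jump relation. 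The paper's argument explains where the BVP (\ref{S2E3}) comes from and produces the convergent series exploited in \Cref{Section3}. Your Step 3 measures the net current through the poloidal cut, $\int_{\sigma_p}(B+\nabla f-\alpha\widetilde{\Gamma}_p)$. It is equivalent to the paper's criterion $\overline{Q}=0$, since both reduce to the vanishing of this line integral.
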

We point out that if we define $$j_{\alpha}:=B\times \mathcal{N}+\nabla f\times \mathcal{N}-\alpha \widetilde{\Gamma}_p\times \mathcal{N},$$ then this defines a div-free vector field on $\Sigma$ and thus the $j_{\alpha}$ are admissible current distributions on $\Sigma$. Further, the term $\alpha \widetilde{\Gamma}_p\times \mathcal{N}$ does not affect the magnetic field within $\Omega$ and is thus a degree of freedom. On average, the field lines of $\widetilde{\Gamma}_p\times \mathcal{N}$ do not wind around poloidally around $\Sigma$, but instead, on average, wind around solely toroidally around $\Sigma$, cf. \cite[Theorem B.1]{Gerner26KernelImage}. Therefore, the term $\alpha \widetilde{\Gamma}_p\times \mathcal{N}$ allows us to control the toroidal complexity of the current distribution.
\section{Numerical aspects}
\label{Section3}
In order to compute the relevant quantity
\begin{gather}
	\nonumber
	j_{\alpha}=B\times \mathcal{N}+\nabla f\times \mathcal{N}-\alpha\widetilde{\Gamma}_p\times \mathcal{N}
\end{gather}
one needs to compute the quantities
\begin{enumerate}
	\item $B|_{\Sigma}$ or more generally the vacuum field compatible with $B$ within some open neighbourhood of $\overline{P}$,
	\item a solution $f$ to the BVP (\ref{S2E3}),
	\item the harmonic Neumann field $\widetilde{\Gamma}_p$.
\end{enumerate}
In the following subsections we discuss how each of these quantities may be approximated.
\subsection{Computing the unique vacuum field}
To get the main idea across we consider here the situation of a flat surface. Let $$H:=\{(x,y,z)\in \mathbb{R}^3\mid z\geq 0\}$$
be the upper half space and suppose that
\begin{gather}
	\nonumber
	\operatorname{curl}(\Gamma)=0=\operatorname{div}(\Gamma)\text{ in }H,
	\\
	\nonumber
	\Gamma^3(x,y,0)=0\text{ for all }(x,y)\in \mathbb{R}^2
\end{gather}
where $\Gamma^k$ denotes the $k$-th component of $\Gamma$ in Euclidean coordinates. The goal is to obtain $\Gamma(x,y,z)$ from knowledge of $\Gamma|_{\partial H}$, i.e. $\Gamma(x,y,0)$. To this end one may perform a power series expansion around $z=0$ and write
\begin{gather}
	\nonumber
	\Gamma(x,y,z)=\sum_{k=0}^{\infty}(\partial^{(k)}_z\Gamma)(x,y,0)\frac{z^k}{k!}.
\end{gather}
We are therefore left with expressing $(\partial^{(k)}_z\Gamma)(x,y,0)$ in terms of $\Gamma(x,y,0)$ and its derivatives. We know that
\begin{gather}
	\nonumber
	\partial_z\Gamma^3=-(\partial_x\Gamma^1)-(\partial_y\Gamma^2)\text{, }
	\\
	\nonumber
	\partial_z\Gamma^2=\partial_y\Gamma^3\text{, }\partial_z\Gamma^1=\partial_x\Gamma^3.
\end{gather}
This shows that we can express $(\partial_z\Gamma)(x,y,0)$ in terms of $(\partial_x\Gamma)(x,y,0),(\partial_y\Gamma)(x,y,0)$ and so $(\partial_z\Gamma)(x,y,0)$ is determined by the behaviour of $\Gamma(x,y,0)$ alone. Higher order derivatives can be computed accordingly. For example,
\begin{gather}
	\nonumber
	\partial^2_{zz}\Gamma^3=-(\partial_z\partial_x\Gamma^1)-(\partial_z\partial_y\Gamma^2)
	\\
	\nonumber
	=-\partial^2_{xx}\Gamma^3-\partial^2_{yy}\Gamma^3
\end{gather}
and similarly all higher derivatives in $z$-direction can be expressed through derivatives in $x$ and $y$ direction alone. Hence the vacuum field is uniquely determined by its values on the boundary. In practice, the CWS $\Sigma$ will be curved and one has to take into account geometric properties of the CWS when computing the vacuum field in terms of a power series expansion. We refer the reader to \cite[Appendix D]{Ger26CoilGeometry} for a detailed account of this approach in the context of curved CWSs. We just point out that the main idea is to work in tubular neighbourhood coordinates to perform the expansion.

In contrast to that, there has also been another approach presented in a different context in \cite{LiuZZ25}, where the computation of a vacuum field around a flux surface in Boozer coordinates was presented. We point out that the approach in \cite{LiuZZ25} assumes that the vacuum field admits a flux function (and thus a first integral). However, in general it might be that the vacuum field corresponding to a certain plasma equilibrium might not admit first integrals and as such the approach in \cite{LiuZZ25} cannot always be used to compute the compatible vacuum field. Other vacuum field computations of relevance in the plasma fusion research have also been considered in \cite{Malhotra2020}. However, \cite{Malhotra2020} does not consider the computation of the compatible vacuum field directly.
\subsection{Solving the BVP}
In order to compute a solution to the BVP (\ref{S2E3}) one needs to specify the average of $f$ to obtain unique solutions, so that one may assume $\int_{\Omega}fdx=0$ when looking for a solution. The main difficulty in finding a solution to (\ref{S2E3}) consists in computing $\left(\frac{\operatorname{Id}}{2}+w^{\operatorname{Tr}}_{\Omega}\right)^{-1}$. To achieve this one may exploit that this inverse admits a Neumann series expansion, cf. \cite[Formula after equation (4.15)]{Gerner26KernelImage}, $$ \left(\frac{\operatorname{Id}}{2}+w^{\operatorname{Tr}}_{\Omega}\right)^{-1}=\sum_{k=0}^{\infty}\left(\frac{\operatorname{Id}}{2}-w^{\operatorname{Tr}}_{\Omega}\right)^k$$ and thus
$$\left(\frac{\operatorname{Id}}{2}+w^{\operatorname{Tr}}_{\Omega}\right)^{-1}-\operatorname{Id}=\sum_{k=1}^{\infty}\left(\frac{\operatorname{Id}}{2}-w^{\operatorname{Tr}}_{\Omega}\right)^k.$$
The main observation now is that this series converges exponentially fast in the sense that there exists some $0<\lambda<1$ (depending on the CWS $\Sigma$) such that for every natural number $n$ we have
\begin{gather}
	\nonumber
	\left\|\sum_{k=1}^{\infty}\left(\frac{\operatorname{Id}}{2}-w^{\operatorname{Tr}}_{\Omega}\right)^k-\sum_{k=1}^{n}\left(\frac{\operatorname{Id}}{2}-w^{\operatorname{Tr}}_{\Omega}\right)^k\right\|\leq \frac{\lambda^{n+1}}{1-\lambda}
\end{gather}
for a suitable norm $\|\cdot \|$, cf. \cite[Corollary 4.4]{Gerner26KernelImage} for the details. Hence, in practice, it should suffice to take the first few terms of this expansion in order to efficiently compute the boundary values. The most difficult task which is left is therefore to find approximations for $w_{\Omega}^{\operatorname{Tr}}(\phi)$ for a given $\phi$. To ease the computation of $w^{\operatorname{Tr}}_{\Omega}$ one may observe that
\begin{gather}
	\nonumber
	\int_{\partial\Omega}w^{\operatorname{Tr}}_{\Omega}(\phi)(x)\cdot \psi(x)dS=\int_{\partial\Omega}w_{\Omega}(\psi)(x)\cdot \phi(x)dS
\end{gather}
where $w_{\Omega}(\psi)$, the double layer potential of $\psi$, is defined by $$w_{\Omega}(\psi)(x):=\frac{\int_{\partial\Omega}\psi(y)\mathcal{N}(y)\cdot \frac{y-x}{|y-x|^3}dS(y)}{4\pi}.$$ An efficient way to compute the double layer potential is presented in \cite{Malhotra2020} so that overall it is also possible to numerically compute a solution to the BVP (\ref{S2E3}).
\subsection{Computing the harmonic Neumann field}
The main issue regarding the numerical computation of the harmonic Neumann field $\widetilde{\Gamma}_p$ is that it is defined on an unbounded domain. We notice however, that only $\widetilde{\Gamma}_p\times \mathcal{N}$ enters the analytic formula for the current. It is hence enough to approximate $\widetilde{\Gamma}_p$ on some neighbourhood around the CWS. Before we discuss how to achieve this we introduce the following space of harmonic Neumann fields on a bounded domain $U\subset \mathbb{R}^3$
\begin{gather}
	\nonumber
	\mathcal{H}_N(U):=\{\Gamma \mid \operatorname{curl}(\Gamma)=0=\operatorname{div}(\Gamma)\text{, }\Gamma \parallel \partial U\}.
\end{gather}
For a given CWS $\Sigma$, fix some $0<r$ so large that $\overline{\Omega}\subset B_r(0)$ where $\Omega$ denotes the finite region enclosed by $\Sigma$ and where $B_r(0)$ denotes the Euclidean ball of radius $r$ centred at $0$. Then for every such $r$ we have $\operatorname{dim}\left(\mathcal{H}_N(B_r(0)\setminus \overline{\Omega})\right)=1$ and for a fixed poloidal loop $\sigma_p$ on $\partial\Omega=\Sigma$ there is a unique element $\Gamma_r\in \mathcal{H}_N(B_r(0)\setminus \overline{\Omega})$ with $\int_{\sigma_p}\Gamma_r=1$. We then have the following result, see \Cref{AppA} for a proof.
\begin{prop}[Approximating the exterior Harmonic Neumann field]
	\label{S3P1}
	Let $\Sigma$ be our (toroidal) CWS and let $\Omega$ be the finite region enclosed by $\Sigma$. Then with the notation above we have $$\Gamma_r\times \mathcal{N}\rightarrow \widetilde{\Gamma}_p\times \mathcal{N} $$
	as $r\rightarrow \infty$. Even more, we have a corresponding estimate $$\|\Gamma_r-\widetilde{\Gamma}_p\|_{L^2(B_{r}(0)\setminus \overline{\Omega})}\in \mathcal{O}\left(r^{-\frac{3}{2}}\right)\text{ as }r\rightarrow\infty.$$
\end{prop}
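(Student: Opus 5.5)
Since $\Gamma_r-\widetilde{\Gamma}_p$ is curl-free, div-free, tangent to $\Sigma$ and has zero circulation along $\sigma_p$, I will show that it is the gradient of a harmonic function whose only source is the mismatch at the outer sphere, and then estimate its energy. Both fields are harmonic Neumann fields on $\Omega_r:=B_r(0)\setminus\overline{\Omega}$ in the curl/div sense. The only failure is the boundary condition on the outer sphere $S_r=\partial B_r(0)$: $\Gamma_r$ is tangent there, while $\widetilde{\Gamma}_p$ is not.

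\textbf{Step 1: decay of $\widetilde{\Gamma}_p$.} Since $\widetilde{\Omega}$ has first de Rham cohomology generated by the poloidal loop, $\widetilde{\Gamma}_p$ is not globally a gradient. However, on the complement of a large ball it is curl-free on a simply connected set (the exterior of a ball in $\mathbb{R}^3$), so $\widetilde{\Gamma}_p=\nabla h$ for $|x|\geq R_0$. Here $h$ is harmonic, and because $\widetilde{\Gamma}_p\in H^1$ decays, $h$ tends to a constant which we may take to be $0$. Indeed, flux considerations show that $h$ has no monopole term: the flux of $\widetilde{\Gamma}_p$ through $S_r$ equals its flux through $\Sigma$, which vanishes by tangency and $\operatorname{div}=0$. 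Hence the standard multipole expansion gives $h=\mathcal{O}(|x|^{-2})$ and $|\widetilde{\Gamma}_p(x)|=\mathcal{O}(|x|^{-3})$. In particular, on $S_r$ we get $|\mathcal{N}\cdot\widetilde{\Gamma}_p|=\mathcal{O}(r^{-3})$.

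\textbf{Step 2: the difference is a gradient.} Set $D_r:=\Gamma_r-\widetilde{\Gamma}_p$ on $\Omega_r$. It is curl-free, and its circulation along $\sigma_p$ vanishes, since both fields have circulation $1$. The first cohomology of $\Omega_r$ is one-dimensional and detected by $\sigma_p$, because $\Omega_r$ deformation retracts onto the complement of the solid torus. Therefore $D_r=\nabla u_r$ with $u_r$ harmonic in $\Omega_r$, subject to the boundary conditions $\partial_{\mathcal{N}}u_r=0$ on $\Sigma$ and $\partial_{\mathcal{N}}u_r=-\mathcal{N}\cdot\widetilde{\Gamma}_p=:g_r$ on $S_r$. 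The compatibility condition $\int_{S_r}g_r=0$ holds by Step 1.

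\textbf{Step 3: energy estimate (main obstacle).} Integrating by parts,
\begin{gather*}
\|D_r\|^2_{L^2(\Omega_r)}=\int_{S_r}u_r\,g_r\,dS\leq \|g_r\|_{L^2(S_r)}\,\|u_r-c\|_{L^2(S_r)}.
\end{gather*}
The first factor satisfies $\|g_r\|_{L^2(S_r)}=\mathcal{O}(r^{-3}\cdot r)=\mathcal{O}(r^{-2})$. The difficulty is controlling $\|u_r-c\|_{L^2(S_r)}$ by $\|\nabla u_r\|_{L^2(\Omega_r)}$ with a constant that scales correctly in $r$. I would first restrict to the annulus $A_r:=B_r\setminus B_{r/2}$; for large $r$ this contains no part of $\Omega$. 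By scaling, a trace plus Poincaré inequality on $A_1$ gives
\begin{gather*}
\|u_r-c\|_{L^2(S_r)}\leq C r^{1/2}\|\nabla u_r\|_{L^2(A_r)}.
\end{gather*}
Combining, $\|D_r\|^2\leq C r^{-2}r^{1/2}\|D_r\|$, hence $\|D_r\|_{L^2(\Omega_r)}=\mathcal{O}(r^{-3/2})$, which is the claimed rate.

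\textbf{Step 4: convergence of the boundary currents.} For the convergence of $\Gamma_r\times\mathcal{N}$ on $\Sigma$, I would use interior-to-boundary regularity near $\Sigma$. Fix a collar $W$ between $\Sigma$ and a fixed sphere $S_{R_1}$. There $\nabla u_r$ is harmonic with zero Neumann data on $\Sigma$, so elliptic boundary estimates bound $\|\nabla u_r\|_{C^k(\Sigma)}$ by $C\|\nabla u_r\|_{L^2(W)}$, and the latter tends to $0$. This gives $\Gamma_r\times\mathcal{N}\to\widetilde{\Gamma}_p\times\mathcal{N}$, even uniformly.
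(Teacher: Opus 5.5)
Your proof is correct and has the same skeleton as the paper's. The difference $\Gamma_r-\widetilde{\Gamma}_p$ is the gradient of a harmonic function whose only Neumann data sit on $\partial B_r(0)$. The energy identity then reduces $\|\Gamma_r-\widetilde{\Gamma}_p\|^2_{L^2}$ to a pairing on the outer sphere, and the $O(r^{-3})$ decay of $\mathcal{N}\cdot\widetilde{\Gamma}_p$ against a trace bound of order $r^{1/2}$ gives $r^{-3/2}$.

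You handle three ingredients differently.
\begin{itemize}
\item \textbf{Decay.} You prove $|\widetilde{\Gamma}_p(x)|=O(|x|^{-3})$ intrinsically, via a multipole expansion outside a large ball, with the monopole removed by the zero-flux argument. The paper instead reads the decay off the explicit representative (\ref{AppAE2}) together with $\dim\mathcal{H}_N(\widetilde{\Omega})=1$.
\item \textbf{Boundary pairing.} The paper normalises $\int_{U_R}f_R\,d^3x=0$ and controls the boundary term through $W^{\pm\frac{1}{2},2}(\partial B_R(0))$ duality, a cut-off extension, and a Poincar\'e--Wirtinger inequality on all of $U_R$. This forces it to bound $R^2\lambda_N(U_R)$ from below uniformly, using the Rauch--Taylor convergence result. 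You instead use $\int_{S_r}g_r\,dS=0$ to subtract the mean over the annulus $B_r(0)\setminus B_{r/2}(0)$. Then only a trace--Poincar\'e inequality on the fixed annulus $A_1$ plus scaling is needed. This is a genuine simplification: no spectral information about the perforated balls enters.
\item \textbf{Convergence on $\Sigma$.} The paper gets $\Gamma_r\times\mathcal{N}\to\widetilde{\Gamma}_p\times\mathcal{N}$ only in $W^{-\frac{1}{2},2}(\Sigma)$, from continuity of the twisted tangential trace on $H(\operatorname{curl})$. Your local Neumann regularity on a fixed collar of $\Sigma$ gives convergence in every $C^k(\Sigma)$. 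In fact it comes with the same rate $O(r^{-3/2})$, since $\|\nabla u_r\|_{L^2(W)}\leq\|D_r\|_{L^2(\Omega_r)}$. The price is invoking up-to-the-boundary elliptic estimates rather than an abstract trace theorem.
\end{itemize}

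One small wording slip: it is $\mathbb{R}^3\setminus\overline{\Omega}$ that deformation retracts onto $\overline{B_r(0)}\setminus\overline{\Omega}$, which is homotopy equivalent to $\Omega_r$, not the other way round. The cohomological conclusion you draw from it is still correct.
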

\begin{figure}[H]
	\centering
	\includegraphics[width=0.405\textwidth]{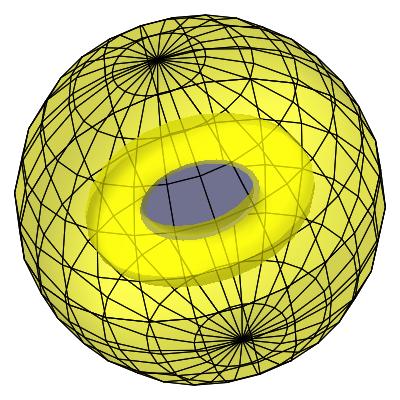}
	\caption{
		The toroidal CWS $\Sigma$ is seen inside the ball containing it. A cut surface is shown in blue; a cut surface is any surface $S$ inside $U_r:=B_r(0)\setminus \overline{\Omega}$ with the property that $U_r\setminus S$ is simply-connected. One can then look for a solution $\widetilde{f}_r$ of the BVP $\Delta \widetilde{f}_r=0$ in $U_r$, $\mathcal{N}\cdot \nabla \widetilde{f}_r=0$ along the yellow surfaces and $\widetilde{f}^{+}_r|_{S}=+1$, $\widetilde{f}^{-}_r|_{S}=0$ where $\widetilde{f}^{\pm}_r$ denotes the value of $\widetilde{f}_r$ when it approaches the blue surface from the top and bottom respectively. One then has the identity $\Gamma_r=\nabla \widetilde{f}_r$.} 
	\label{FigureBVP}
\end{figure}
\Cref{S3P1} tells us that it is enough to compute the harmonic Neumann fields of domains of the form $B_r(0)\setminus \overline{\Omega}$. This can be achieved by introducing a cut surface and solving a PDE for a scalar function with a jump condition along the cut surface, see \Cref{FigureBVP}.
\section{Current dynamics}
\label{Section4}
Since it is natural to seek current distributions whose field lines are as poloidal as possible, we consider here solely the situation in which $\alpha=\int_{\sigma_p}B$, cf. \Cref{S2T7}. We then introduce the \textit{effective magnetic field} $B_{\operatorname{eff}}$ on a given CWS $\Sigma$ as follows
\begin{gather}
	\label{S4E1}
	B_{\operatorname{eff}}:=B-\left(\int_{\sigma_p}B\right)\widetilde{\Gamma}_p+\nabla f
\end{gather}
so that in turn the unique, most poloidal, current distribution $j_p$ may be expressed compactly as
\begin{gather}
	\label{S4E2}
	j_p=B_{\operatorname{eff}}\times \mathcal{N}.
\end{gather}
We observe that (\ref{S4E2}) implies that the current $j_p$ only sees the part of $B_{\operatorname{eff}}$ which is tangent to the CWS $\Sigma$. In particular, if we want to be able to relate dynamical properties of $B_{\operatorname{eff}}$ (and thus those of $B$) to the dynamical properties of $j_p$ one should aim to select the CWS $\Sigma$ in such a way that $B_{\operatorname{eff}}\parallel \Sigma$. Otherwise it may happen that even though $B_{\operatorname{eff}}$ has well-behaved dynamics, there is a point $x$ on the CWS such that $B_{\operatorname{eff}}(x)\perp \Sigma$ and thus $j_p(x)=0$. It is however known that the presence of zeros of an electric current distribution lead to the presence of saddle and centre regions which create complicated current field line dynamics, cf. \cite[Theorem 1.1]{GerNB26ElectricCurrentArXiv}.
Let us therefore examine when the condition $B_{\operatorname{eff}}\parallel \Sigma$ is satisfied. Recall (\ref{S4E1}), that $f$ solves the BVP (\ref{S2E3}) and that $\widetilde{\Gamma}_p$ is tangent to $\Sigma$, so that
\begin{gather}
	\nonumber
	\mathcal{N}\cdot B_{\operatorname{eff}}=\mathcal{N}\cdot B+\mathcal{N}\cdot \nabla f
	\\
	\nonumber
	=\left(\frac{\operatorname{Id}}{2}+w^{\operatorname{Tr}}_{\Omega}\right)^{-1}(\mathcal{N}\cdot B)
\end{gather}
and thus
\begin{gather}
	\label{S4E3}
	\mathcal{N}\cdot B_{\operatorname{eff}}=0\Leftrightarrow \mathcal{N}\cdot B=\left(\frac{\operatorname{Id}}{2}+w^{\operatorname{Tr}}_{\Omega}\right)(0)=0
\end{gather}
where we used that $\frac{\operatorname{Id}}{2}+w^{\operatorname{Tr}}_{\Omega}$ is a linear operator in the last step. Therefore, if we want to be able to relate properties of $B$ to the dynamics of $j_p$ we must, according to (\ref{S4E3}), choose the CWS to be an invariant surface of the unique compatible vacuum field.

So suppose now that $B\parallel \Sigma$ (and thus $B_{\operatorname{eff}}\parallel \Sigma$). It then follows that if $B_{\operatorname{eff}}$ does not vanish on $\Sigma$, then neither does $j_p$. In addition, $j_p$ is a div-free field on $\Sigma$ so that it then follows from \cite[Lemma B.1]{Ger25NonoptimalHelicityArXiv} that all field lines of $j_p$ must close poloidally and thus correspond to modular coils.

In practice, if the CWS $\Sigma$ is not chosen as an invariant surface of the compatible vacuum field (which corresponds to the situation $\mathcal{N}\cdot B\neq 0$ on $\Sigma$), then the non-tangency of $B_{\operatorname{eff}}$ to $\Sigma$ may destroy desirable properties which $j_p$ would have inherited from $B_{\operatorname{eff}}$ otherwise. It is therefore of interest in the coil design process to look for a CWS on which $B\cdot \mathcal{N}\approx 0$.

We provide in \Cref{AppC} a sufficient criterion which ensures that $B_{\operatorname{eff}}$ does not vanish on $\Sigma$ if $B\parallel \Sigma$, which in turn, as discussed previously, ensures that all field lines of $j_p$ close poloidally.
\section{Discussion}
\label{Section5}
\subsection{Comparison to previous formulae}
To the best of my knowledge only one other analytic formula has been found, cf. \cite[Lemma 2.8]{Gerner26KernelImage}, which applies for arbitrary target fields, defined on the whole finite region $\Omega$ bounded by the CWS $\Sigma$. More precisely, if $B_T$ is a vector field on $\Omega$ which satisfies $$\operatorname{div}(B_T)=0=\operatorname{curl}(B_T)$$ and $h$ is a solution to the following BVP on $\Omega$
\begin{gather}
	\label{S5E1}
	\Delta h=0\text{, }\mathcal{N}\cdot \nabla h=\left(\frac{\operatorname{Id}}{2}+w^{\operatorname{Tr}}_{\Omega}\right)^{-1}(B_T\cdot \mathcal{N}),
\end{gather}
then
\begin{gather}
	\nonumber
	\operatorname{BS}_{\Sigma}(j)(x)=B_T(x)\text{ in }\Omega\text{ where }
	\\
	\label{S5E2}
	j=\left(\int_{\sigma_t}B_T\right)\Gamma_t\times \mathcal{N}+\nabla h\times \mathcal{N}
\end{gather}
where $\sigma_t$ is any fixed toroidal curve in $\Sigma$ and $\Gamma_t$ is the unique element in $\mathcal{H}_N(\Omega)$ satisfying $\int_{\sigma_t}\Gamma_t=1$.

We emphasise that the formulation in \Cref{S2T7} has several advantageous over the formulation in (\ref{S5E2}). From a computational point of view, in order to be able to compute the current $j$ according to the formula in (\ref{S5E2}), we need to compute an element of $\mathcal{H}_N(\Omega)$ and solve the BVP (\ref{S5E1}). The BVP (\ref{S5E1}) also includes the inversion of the operator $\frac{\operatorname{Id}}{2}+w^{\operatorname{Tr}}_{\Omega}$ so that solving a BVP of the form (\ref{S5E1}) is comparable in complexity to solving the BVP (\ref{S2E3}). A minor disadvantage of the formulation in \Cref{S2T7} is that we need to compute an element of $\mathcal{H}_N(\widetilde{\Omega})$ which is defined on an unbounded domain. However, we have seen in \Cref{S3P1}, that one can reduce this problem to finite domains, so that the computational complexity of computing an element in $\mathcal{H}_N(\Omega)$ and $\mathcal{H}_N(\widetilde{\Omega})$ is comparable. The main difference is that the formulation in (\ref{S5E1}) and (\ref{S5E2}) works not directly with the plasma equilibrium field $B$ (and its corresponding vacuum extension) but it works with the corresponding target field $B_T=B-\operatorname{BS}_P(J)$ and we require knowledge of this field on $\Sigma$. This means that in formulation (\ref{S5E2}) we also need to compute the compatible vacuum field, but in addition we also need to compute the quantity $\operatorname{BS}_P(J)$. While one can avoid singularities of the integral kernel by noticing that (\ref{S5E1}) only requires knowledge 
of $\operatorname{BS}_P(J)$ on\slash near $\Sigma$ which is of positive distance to $P$ and noticing that $\sigma_t$ is also contained in $\Sigma$, the formulation in (\ref{S5E2}) adds an additional computation which may also introduce numerical inaccuracies which are avoided in the formulation in \Cref{S2T7}. From a more theoretical point of view, the main advantage of \Cref{S2T7} in comparison to (\ref{S5E2}) is that it works directly with the full plasma equilibrium field rather than with the corresponding target field. This allows one to infer more directly which properties of the full plasma equilibrium field $B$ are relevant for coil design complexity. In particular, it is quite natural to try to find a CWS $\Sigma$ for which $B\parallel \Sigma$, because in leading order charged particles follow magnetic field lines, cf. \cite{littlejohn1983variational}, so that if $B$ is not tangent to $\Sigma$ and some particle escapes the plasma region $P$, then the field lines of $B$ would lead them directly to $\Sigma$ which may cause structural damage. It follows further from the BVP (\ref{S2E3}) that if $B\parallel \Sigma$, then $\nabla f=0$ and so the current formula simplifies significantly. One should further expect all field lines of the resulting current $j_p$ to close poloidally. In contrast, in order for the formula (\ref{S5E2}) to simplify in the same way one would require, according to (\ref{S5E1}), $B_T\parallel \Sigma$. However, this condition does not imply any advantageous features of the full plasma equilibrium field and in this sense is unnatural.
\subsection{Coil geometry and plasma equilibrium properties}
In recent work, \cite{RodSengArXiv}, a simplified situation has been considered where the full plasma equilibrium field $B$ is assumed to be a vacuum field, i.e. to satisfy $\operatorname{curl}(B)=0$ in $P$, and it was additionally assumed that the CWS coincides with the plasma boundary $\partial P$ in order to gain some insights regarding which properties of $B$ influence the geometric complexity of the field lines of the corresponding current $j$ reproducing $B$. In this case the current $j:=B\times \mathcal{N}$ reproduces $B$ in $P=\Omega$ (recall we assume here for the moment that $\Sigma=\partial P$), cf. \cite[Equation (2)]{RodSengArXiv}. We notice that if $B$ is curl-free, then $\int_{\sigma_p}B=0$ by Stokes theorem since $\sigma_p\subset \Sigma=\partial P$ bounds a disc in $P$. In addition, $B\cdot \mathcal{N}=0$ on $\partial P$ by the usual boundary condition assumption of $B$ so that the formula $j=B\times \mathcal{N}$ is consistent with \Cref{S2T7} and the choice of $\alpha$ which makes the field lines of the current poloidal.

To extend the discussion in \cite{RodSengArXiv} to non-vacuum equilibrium fields and CWS of positive distance to the plasma region $P$ suppose from now on that $B$ is not necessarily curl-free, i.e. it is allowed to be a finite $\beta$-plasma, and that $\operatorname{dist}(P,\Sigma)>0$. We have already discussed that it is advantageous to seek a CWS $\Sigma$ which satisfies $B\parallel \Sigma$ (recall that outside of $P$, $B$ is defined to be its unique compatible vacuum extension) because under mild assumptions this leads to poloidal field lines of the coil current $j_p$ which then takes the form
\begin{gather}
	\label{S5E3}
	j_p=B_{\operatorname{eff}}\times \mathcal{N}=B\times \mathcal{N}-\left(\int_{\sigma_p}B\right)\widetilde{\Gamma}_p\times \mathcal{N}.
\end{gather}
We hence get $B_{\operatorname{eff}}=B-\left(\int_{\sigma_p}B\right)\widetilde{\Gamma}_p$. The formula for the normal curvature $\widetilde{\kappa}_n$ and the geodesic curvature $\widetilde{\kappa}_g$ of $j_p$ are then given by the formulas, cf. \cite[Equation (5)]{RodSengArXiv} (we adapt the notation from \cite{RodSengArXiv})
\begin{gather}
	\label{S5E4}
	\widetilde{\kappa}_n=2H-\kappa_n\text{ and }\widetilde{\kappa}_g=\frac{\hat{\tau}\cdot \nabla_{\hat{\tau}}B_{\operatorname{eff}}}{|B_{\operatorname{eff}}|}
\end{gather}
where $H=\frac{\kappa_1+\kappa_2}{2}$ is the mean curvature of $\Sigma$ (and $\kappa_1,\kappa_2$ are the principal curvatures w.r.t. $\mathcal{N}$), $\kappa_n$ denotes the normal curvature of $B_{\operatorname{eff}}$ and $\hat{\tau}:=\frac{B_{\operatorname{eff}}\times \mathcal{N}}{|B_{\operatorname{eff}}|}$. Now the discussions in \cite[Section III]{RodSengArXiv} apply with two caveats
\begin{enumerate}
	\item It is in general not the magnetic field $B$ itself that matters, but instead the current complexity is determined by the effective magnetic field $B_{\operatorname{eff}}=B-\left(\int_{\sigma_p}B\right)\widetilde{\Gamma}_p$. Notice however that the vacuum extension is uniquely determined by $B|_{\partial P}$ and thus so are the possible choices of $\Sigma$ (if we insist that the CWS is an invariant surface of the vacuum field). In addition, $\widetilde{\Gamma}_p$ is uniquely determined by $\Sigma$ so that the effective field $B_{\operatorname{eff}}$ is uniquely determined by the two input parameters $(B|_{\partial P},\operatorname{dist}(\Sigma,P))$ (again assuming the CWS is chosen as an invariant surface).
	\item If the CWS is far away from $P$, then the behaviour of the corresponding vacuum extension, i.e. $B|_{\Sigma}$, might differ greatly from the behaviour of $B|_{\partial P}$. So even if $\int_{\sigma_p}B=0$ and thus $B_{\operatorname{eff}}=B$ it is of importance to understand in more detail how the dynamics of $B|_{\partial P}$ influence the dynamics of its vacuum field extension in regions which are far away from the plasma region. We leave this question open as a potential future research direction.
\end{enumerate}
\subsection{Current strength and magnetic normal curvature}
It is customary in the stellarator community to regard $\|j\|_{L^{\infty}(\Sigma)}$ as a proxy for coil complexity because it is related to the minimal coil-coil distance and number of coils, cf. \cite[Section 3.1]{KLM24}. Following this idea it has been found empirically in \cite[Section 4]{KLM24} that if $y=x+\delta \mathcal{N}(x)$, for $x\in \partial P$ and $\delta=\operatorname{dist}(y,\partial P)$, is a point on a conformal CWS $\Sigma_{\delta}$ at which the maximum of $|j(y)|$ is achieved, then the corresponding point $x\in \partial P$ on the plasma boundary tends to be a point of smallest magnetic curvature radius, or equivalently a point of highest magnetic normal curvature.

We provide here a theoretical explanation for this phenomenon. We start by observing that in a sufficiently small neighbourhood around the plasma boundary $\partial P$, every point $y$ admits a unique point $x\in \partial P$ such that $\operatorname{dist}(y,\partial P)=\operatorname{dist}(y,x)$, and in this case we have the identity $y=x+\delta \mathcal{N}(x)$ where $\delta=\operatorname{dist}(y,\Sigma)$. As alluded to in Section 3.1 we can compute the unique vacuum field compatible with a given plasma equilibrium field $B$ in such a neighbourhood by means of a power series expansion. Following the procedure outlined in \cite[Appendix D]{Ger26CoilGeometry} we obtain the following expansion for the vacuum field $B$
\begin{gather}
	\nonumber
	B(y)=B(x)+\delta s(B)(x)
	\\
	\label{S5E5}
	- \delta \operatorname{div}_{\partial P}(B)(x)\mathcal{N}(x)+\mathcal{O}(\delta^2)\text{ as }\delta\searrow 0
\end{gather}
where $y=x+\delta \mathcal{N}(x)$ for the unique projection point $x\in \partial P$, $\delta=\operatorname{dist}(y,\partial P)$, $s$ denotes the shape operator on $\partial P$ (recall that we assume $B\parallel \partial P$ and thus it makes sense to apply the shape operator to $B$ on $\partial P$) and where $\operatorname{div}_{\partial P}(B)=\operatorname{div}(B)-\mathcal{N}\cdot \nabla_{\mathcal{N}}B=-\mathcal{N}\cdot \nabla_{\mathcal{N}}B$ is the intrinsic divergence of $B$ on $\partial P$.

Now, keeping in mind that $B(x)$ and $s(B)(x)$ are both tangent to $\partial P$, we obtain the following Taylor expansion for $|B(y)|$
\begin{gather}
	\nonumber
	|B(y)|=|B(x)|\left(1+b(x)\cdot s(b)(x)\delta\right)+\mathcal{O}(\delta^2)
	\\
	\label{S5E6}
	=|B(x)|\left(1+\kappa_n(x)\delta\right)+\mathcal{O}(\delta^2)
\end{gather}
as $\delta\searrow 0$, where $b(x)=\frac{B(x)}{|B(x)|}$ and $\kappa_n(x)$ denotes the normal curvature of the magnetic field $B$ at position $x$.

To gain more insight we now consider the situation in which the CWS $\Sigma$ is chosen such that $B\parallel \Sigma$ and such that $\int_{\sigma_p}B=0$. In this case the effective magnetic field reduces to the magnetic field $B$ itself, $B_{\operatorname{eff}}=B$, and thus our poloidal current $j_p$, recall (\ref{S4E2}), becomes
$$j_p(y)=B(y)\times \mathcal{N}(y)\text{ on }\Sigma.$$
In this case we find according to (\ref{S5E6})
\begin{gather}
	\label{S5E7}
	|j_p(y)|=|B(x)|\left(1+\kappa_n(x)\delta\right)+\mathcal{O}(\delta^2).
\end{gather}
It is now immediately clear from (\ref{S5E7}) that regions of high (positive) $\kappa_n$ favour high current strength which is in agreement with the empirical observations made in \cite{KLM24} regarding conformal CWSs. One additional feature stemming from the fact that the CWS under consideration is chosen as an invariant surface of the compatible vacuum field rather than as a conformal surface is that distinct points $y$ on the CWS may correspond to different values of $\delta$. Thus, (\ref{S5E7}) tells us that one should expect to find stronger currents in regions which are farther away from the plasma boundary, i.e. points $y$ which correspond to larger values of $\delta$.  

Lastly, we point out that (\ref{S5E7}) suggests that the maximum of $|j_p(y)|$ will be achieved at a point at which $\kappa_n(x)$ is positive. We then see that at each such $x$ we formally have
\begin{gather}
	\nonumber
	\frac{d|j_p(x+\delta \mathcal{N}(x))|}{d\delta}|_{\delta=0}
	\\
	\nonumber
	=|B(x)|\kappa_n(x)>0
\end{gather}
and thus the current strength increases monotonically with the coil-plasma distance $\delta$. This explains the behaviour observed in \cite[Figure 4]{KLM24} where precisely this monotonicity was noticed and used to associate a unique coil-plasma distance to a prescribed value of $\|j_p\|_{L^{\infty}(\Sigma)}$.

We recall here that we assumed throughout this discussion that $\int_{\sigma_p}B=0$. In the more general setting $\int_{\sigma_p}B\neq 0$, the current strength is given by $|j_p(y)|=|B_{\operatorname{eff}}(y)|$ and the quantity $\left(\int_{\sigma_p}B\right)\widetilde{\Gamma}_p$ enters the equation. To generalise the above discussion it would therefore be of essence to understand in more detail the behaviour of the vector field $\widetilde{\Gamma}_p$ on $\Sigma$. This is left as a future research direction.
\subsection{Conformal surfaces}
In practice one often considers CWSs which are conformal to the plasma boundary, see e.g. \cite{KLM24}. I.e. one considers surfaces which, for a fixed distance $\delta>0$, are defined by $$\Sigma_{\delta}:=\{x+\delta \mathcal{N}(x)\mid x\in \partial P\}.$$
These type of surfaces, while common, do not need to be invariant surfaces of the vacuum extension of a given plasma equilibrium field $B$. In the near plasma boundary limit one has $$|\mathcal{N}_{\Sigma_{\delta}}\cdot B|\in \mathcal{O}\left(\delta\right)$$ because $B\cdot \mathcal{N}=0$ on $\partial P$. It follows in fact from (\ref{S5E5}) that
\begin{gather}
	\nonumber
	|\mathcal{N}_{\Sigma_{\delta}}\cdot B|\in \mathcal{O}\left(\delta^2\right)
	\Leftrightarrow \operatorname{div}_{\partial P}(B)=0\text{ on }\partial P
\end{gather}
because $\mathcal{N}_{\Sigma_{\delta}}(x+\delta \mathcal{N}(x))=\mathcal{N}(x)$. It follows further from the fact that $B$ satisfies (\ref{S1EExtra}), that
\begin{gather}
	\nonumber
	\operatorname{curl}_{\partial P}(B)=\mathcal{N}\cdot \operatorname{curl}(B)=0\text{ on }\partial P.
\end{gather}
So a conformal surface is an invariant surface of the compatible vacuum field up to second order if and only if $B$ is an intrinsically harmonic vector field on the plasma boundary.

In any case, if a conformal CWS is not too far from the plasma boundary, then the compatible vacuum field is approximately tangent to the CWS and the correction term $\nabla f\times \mathcal{N}$ in the formula for $j$, cf. \Cref{S2T7}, (\ref{S2E3}), will not influence the dynamics too much. In particular if $\operatorname{dist}(\Sigma_{\delta},P)$ is small enough, then one should still expect that all field lines of the resulting currents remain poloidal. On the other hand, if the distance becomes large, then the correction term $\nabla f\times \mathcal{N}$ may start to be of the same order of magnitude as $B\times \mathcal{N}-\left(\int_{\sigma_p}B\right)\widetilde{\Gamma}_p\times \mathcal{N}$ which may introduce zeros to the current distribution which in turn lead to centre and saddle singularity regions on the CWS and increases current\slash coil complexity. This phenomenon is well-known in standard coil optimisation procedures, cf. \cite[Figure 3]{L17}, \cite[Figure 1]{KLM24}. \Cref{S2T7} suggests that if the CWS is not selected as a conformal surfaces, but instead as an invariant surface of the compatible vacuum field of the plasma equilibrium, then this might simplify the current patterns and thus the coil design.
\section{Summary}
\label{Section6}
In this work we provided an analytic formula which, for a given plasma equilibrium field and CWS, computes a current distribution on the CWS which supports the given plasma equilibrium field and its unique compatible vacuum extension. The provided formula works with the full equilibrium field directly and thus opens the door to a more thorough theoretical understanding regarding which properties of a plasma equilibrium field influence the coil complexity. To illustrate this we used this new formula to provide a theoretical explanation of the empirical observation that currents on a CWS tend to achieve their maximal strength at points which correspond to the highest normal magnetic curvature on the plasma boundary. We compared our findings with previous literature and also discussed potential avenues regarding the approximation of the quantities appearing in our analytic formula. Some open questions remain. In particular, the question how precisely the dynamics of the magnetic field lines on the plasma boundary influence the dynamics of its unique vacuum field extension at larger distances. Answering this question will give further insight into which properties of a plasma equilibrium field determine the coil complexity.
\end{multicols}
\section*{Data availability}
No new data was created during the course of this work.
\section*{Conflict of interest}
The author declares that he has no conflict of interest.
\section*{Acknowledgements}
The research was supported in part by the MIUR Excellence Department Project awarded to Dipartimento di Matematica, Università di Genova, CUP D33C23001110001.
\appendix
\section{Exterior Neumann fields and the kernel of the Biot Savart operator}
\label{AppA}
In this section we prove \Cref{S2P4}, \Cref{S2T5} and \Cref{S3P1}. Throughout this section $\Sigma$ is a smooth toroidal surface in $\mathbb{R}^3$. We point out that while the relationship between the dimension of the space of Harmonic Neumann fields and the topology of the underlying domain is well-established for bounded domains as well as weighted exterior domains \cite[Section 2.4, Section 2.5 \& Section 2.6]{S95}, the definition provided in \Cref{S2D3} does not quite fit these frameworks. We therefore provide a proof.
\begin{proof}[Proof of \Cref{S2P4}]
	We recall that we denote by $\widetilde{\Omega}$ the unbounded component of $\mathbb{R}^3\setminus \Sigma$ and that $\mathcal{H}_N(\widetilde{\Omega})$ denotes the vector space of all (smooth) vector fields on $\widetilde{\Omega}$ which are curl- and div-free, tangent to $\partial \widetilde{\Omega}=\Sigma$ and vanish at infinity. The first part of \Cref{S2P4} claims that $\operatorname{dim}\left(\mathcal{H}_N(\widetilde{\Omega})\right)=1.$
	\newline
	\newline
	\underline{Step 1: $\operatorname{dim}\left(\mathcal{H}_N(\widetilde{\Omega})\right)\geq 1$:} Denote by $\Omega$ the finite volume enclosed by $\Sigma$. It is well-known, cf. \cite[Hodge decomposition theorem]{CDG02}, that $\mathcal{H}_N(\Omega)$ is $1$-dimensional. We may then fix some $\Gamma\in \mathcal{H}_N(\Omega)\setminus \{0\}$ and define $$\operatorname{BS}_{\Omega}(\Gamma)(x):=\frac{\int_{\Omega}\Gamma(y)\times \frac{x-y}{|x-y|^3}d^3y}{4\pi}\text{, }x\in \mathbb{R}^3.$$
	Since $\operatorname{BS}_{\Omega}(\Gamma)$ is div-free in $\mathbb{R}^3$ it follows from Gauss' theorem that $\int_{\partial\Omega}\mathcal{N}\cdot \operatorname{BS}_{\Omega}(\Gamma) dS=0$. Hence, there exists a unique solution $g\in \left\{f\in L^6(\widetilde{\Omega})\mid \nabla f\in L^2(\widetilde{\Omega})\right\}$ of the following boundary value problem
	\begin{AppA}
		\label{AppAE1}
		\Delta g=0\text{ in }\widetilde{\Omega}\text{, }\mathcal{N}\cdot \nabla g=-\mathcal{N}\cdot \operatorname{BS}_{\Omega}(\Gamma)\text{, }g(x)\rightarrow 0\text{ as }|x|\rightarrow\infty.
	\end{AppA}
	Now consider
	\begin{AppA}
		\label{AppAE2}
		\widetilde{\Gamma}:=\operatorname{BS}_{\Omega}(\Gamma)+\nabla g.
	\end{AppA}
	Observe that $\operatorname{div}(\widetilde{\Gamma})=0=\operatorname{curl}(\widetilde{\Gamma})$ and $\mathcal{N}\cdot \widetilde{\Gamma}=0$. Further, if $\sigma_p$ is any (poloidal) loop on $\partial \widetilde{\Omega}=\Sigma=\partial \Omega$ which bounds a surface $\mathcal{A}$ in $\Omega$, then it is a standard fact that $\int_{\sigma_p}\operatorname{BS}_{\Omega}(\Gamma)=\int_{\mathcal{A}}\Gamma\cdot n=\operatorname{Flux}(\Gamma)\neq 0$ and thus $\operatorname{BS}_{\Omega}(\Gamma)$ is not a gradient field on $\widetilde{\Omega}$. Consequently $\widetilde{\Gamma}$ is not identically zero. Finally, we notice that $\operatorname{BS}_{\Omega}(\Gamma)\in \mathcal{O}(|x|^{-3})$, \cite[Section III. A.]{CDG01} and that the unique solution of the BVP (\ref{AppAE1}) admits the following representation, cf. \cite[Theorem 6.43]{RCM21},
	\begin{AppA}
		\label{AppAE3}
		g(x)=\int_{\Sigma}\frac{\left(\frac{\operatorname{Id}}{2}+w^{\operatorname{Tr}}_{\Omega}\right)^{-1}(\mathcal{N}\cdot \operatorname{BS}_{\Omega}(\Gamma))(y)}{4\pi |x-y|}dS(y).
	\end{AppA}
	Consequently $$\nabla g(x)=-\int_{\Sigma}\frac{\left(\frac{\operatorname{Id}}{2}+w^{\operatorname{Tr}}_{\Omega}\right)^{-1}\left(\mathcal{N}\cdot \operatorname{BS}_{\Omega}(\Gamma)\right)(y)}{4\pi}\frac{x-y}{|x-y|^3}dS(y).$$ It then follows mutatis mutandis from the arguments in the proof of \cite[Proof of Appendix Lemma 4]{CDG01} that $|\nabla g(x)|\in \mathcal{O}(|x|^{-3})$ as $|x|\rightarrow\infty$. Overall, $0\neq \widetilde{\Gamma}\in \mathcal{O}(|x|^{-3})$ is a non-zero element of $\mathcal{H}_N(\widetilde{\Omega})$ and thus $\operatorname{div}\left(\mathcal{H}_N(\widetilde{\Omega})\right)\geq 1$.
	\newline
	\newline
	\underline{Step 2: $\operatorname{dim}\left(\mathcal{H}_N(\widetilde{\Omega})\right)\leq 1$:} Now suppose that $\widetilde{\Gamma}_0\in \mathcal{H}_N(\widetilde{\Omega})$. One can then follow the reasoning in the derivation of the virtual-casing principle as in \cite[Proof of Lemma 5.5]{G24} in order to show that
	\begin{AppA}
		\nonumber
		\operatorname{BS}_{\Sigma}(\widetilde{\Gamma}_0\times \mathcal{N})(x)=\frac{\int_{\Sigma}(\widetilde{\Gamma}_0(y)\cdot \mathcal{N}(y))\frac{x-y}{|x-y|^3}dS(y)}{4\pi}-\frac{\int_{\widetilde{\Omega}}\operatorname{curl}(\widetilde{\Gamma}_0)(y)\times \frac{x-y}{|x-y|^3}d^3y}{4\pi}
		\end{AppA}
		\begin{AppA}
			\label{AppAExtraExtraExtra}
		-\frac{\int_{\widetilde{\Omega}}\operatorname{div}(\widetilde{\Gamma}_0)(y)\frac{x-y}{|x-y|^3}d^3y}{4\pi}+\begin{cases}
			\widetilde{\Gamma}_0(x) &\text{ if }x\in \widetilde{\Omega} \\ 0 & \text{ if }x\in \Omega
		\end{cases}
	\end{AppA}
	where terms at infinity may be discarded during the derivation due to the behaviour of $\widetilde{\Gamma}_0$ at infinity and where $\mathcal{N}$ denotes the outward pointing unit normal on $\Sigma$ with respect to the exterior domain $\widetilde{\Omega}$. We conclude from (\ref{AppAExtraExtraExtra}) that $\operatorname{BS}_{\Sigma}(\widetilde{\Gamma}_0\times \mathcal{N})=0$ in $\Omega$. However, it also follows that $\operatorname{BS}_{\Sigma}(\widetilde{\Gamma}\times \mathcal{N})=0$ in $\Omega$ where $\widetilde{\Gamma}$ is a non-zero element of $\mathcal{H}_N(\widetilde{\Omega})$ as constructed in (\ref{AppAE2}). It then additionally follows from \cite[Theorem 5.1]{G24} that the kernel of the Biot-Savart operator (within the interior domain $\Omega$) is $1$-dimensional. Thus, there exists some $\alpha\in \mathbb{R}$ such that $$\widetilde{\Gamma}_0\times \mathcal{N}=\alpha \widetilde{\Gamma}\times \mathcal{N}\Leftrightarrow \left(\widetilde{\Gamma}_0-\alpha \widetilde{\Gamma}\right)\times \mathcal{N}=0\text{ on }\Sigma.$$
	Since $\widetilde{\Gamma}_0-\alpha \widetilde{\Gamma}\in \mathcal{H}_N(\widetilde{\Omega})$ we conclude that its normal trace also vanishes on $\Sigma$ and consequently $$\widetilde{\Gamma}_0-\alpha \widetilde{\Gamma}=0\text{ on }\Sigma.$$
	It then follows from \cite[Theorem 3.4.4]{S95} and the fact that $\widetilde{\Gamma}_0-\alpha \widetilde{\Gamma}$ is div- and curl-free that $$\widetilde{\Gamma}_0=\alpha \widetilde{\Gamma}\text{ in }\widetilde{\Omega}.$$
	This proves that $\operatorname{dim}\left(\mathcal{H}_N(\widetilde{\Omega})\right)\leq 1$.
	\newline
	\newline
	\underline{Step 3: Concluding the proof:} We are left with arguing that for any fixed poloidal loop $\sigma_p$ on $\Sigma$ there exists a unique element $\widetilde{\Gamma}_p\in \mathcal{H}_N(\widetilde{\Omega})$ with $\int_{\sigma_p}\widetilde{\Gamma}_p=1$. It follows from the fact that the space $\mathcal{H}_N(\widetilde{\Omega})$ is $1$-dimensional that if for some $\widetilde{\Gamma}\in \mathcal{H}_N(\widetilde{\Omega})$ we have $\int_{\sigma_p}\widetilde{\Gamma}\neq 0$, then by scaling there is a unique $\widetilde{\Gamma}_p\in \mathcal{H}_N(\widetilde{\Omega})$ with $\int_{\sigma_p}\widetilde{\Gamma}_p=1$. But we have already seen in the first step that according to (\ref{AppAE2}) there exists some non-zero element $\Gamma\in \mathcal{H}_N(\Omega)$ such that for $\widetilde{\Gamma}\in \mathcal{H}_N(\widetilde{\Omega})$ given by (\ref{AppAE2}) we have $\int_{\sigma_p}\widetilde{\Gamma}=\operatorname{Flux}(\Gamma)\neq 0$ which completes the proof.
\end{proof}
We point out that we have additionally shown
\begin{AppA}
	\label{AppAE4}
	\mathcal{H}_N(\widetilde{\Omega})=\left\{\Gamma\in H^1(\widetilde{\Omega})\mid \operatorname{curl}(\Gamma)=0=\operatorname{div}(\Gamma)\text{, }\Gamma\parallel \partial\widetilde{\Omega}\text{, }|\Gamma(x)|=\mathcal{O}(|x|^{-3})\text{ as }|x|\rightarrow\infty\right\}.
\end{AppA}
This enables us to prove the specific convergence rate claimed in \Cref{S3P1}. We point out for completeness that the convergence $\Gamma_r\times \mathcal{N}\rightarrow \widetilde{\Gamma}_p\times \mathcal{N}$ takes place in $W^{-\frac{1}{2},2}(\Sigma)$ topology.
\begin{proof}[Proof of \Cref{S3P1}]
	Fix $0<r\leq R$ such that $\overline{\Omega}\subset B_r(0)$ and set $U_R:=B_R(0)\setminus \overline{\Omega}$. We denote by $\Gamma_R$ the unique element of $\mathcal{H}_N(U_R)$ satisfying $\int_{\sigma_p}\Gamma_R=1$ for some fixed poloidal curve $\sigma_p$ on $\partial \Omega=\Sigma=\partial \widetilde{\Omega}$ and denote by $\widetilde{\Gamma}_p$ the unique element of $\mathcal{H}_N(\widetilde{\Omega})$ satisfying $\int_{\sigma_p}\widetilde{\Gamma}_p=1$. We observe that the restriction $\widetilde{\Gamma}_p|_{U_R}$ is curl- and div-free and has the same poloidal circulation as $\Gamma_R$ along $\sigma_p$. Then the Hodge decomposition theorem implies that $$\widetilde{\Gamma}_p=\Gamma_R+\nabla f_R\text{ on }U_R$$
	for a suitable function $f_R\in H^2(U_R)$ (to uniquely fix $f_R$ we assume $\int_{U_R}f_Rd^3x=0$). Since this decomposition is $L^2(U_R)$-orthogonal we find
	\begin{AppA}
		\label{AppAE5}
		\|\nabla f_R\|^2_{L^2(U_R)}=\int_{U_R}\widetilde{\Gamma}_p\cdot \nabla f_Rd^3x=\int_{\partial B_R(0)}(\mathcal{N}\cdot \widetilde{\Gamma}_p)f_Rdx
	\end{AppA}
	where we integrated by parts and used that $\operatorname{div}(\widetilde{\Gamma}_p)=0$ and $\widetilde{\Gamma}_p\parallel \Sigma$ in the last step. We now define the following homogeneous version of the fractional Sobolev norm
	\begin{AppA}
		\label{AppAE6}
		\|f\|^2_{W^{\frac{1}{2},2}(\partial B_R(0))}:=\|f\|^2_{L^2(\partial B_R(0))}+R\int_{\partial B_R(0)}\int_{\partial B_R(0)}\frac{|f_R(x)-f_R(y)|^2}{|x-y|^3}dS(x)dS(y).
	\end{AppA}
	Therefore, if we denote by $\|\cdot\|_{W^{-\frac{1}{2},2}(\partial B_R(0))}$ the corresponding dual norm, then (\ref{AppAE5}) becomes
	\begin{AppA}
		\label{AppAE7}
		\|\nabla f_R\|^2_{L^2(U_R)}\leq \|\mathcal{N}\cdot \widetilde{\Gamma}_p\|_{W^{-\frac{1}{2},2}(\partial B_R(0))}\|f_R\|_{W^{\frac{1}{2},2}(\partial B_R(0))}.
	\end{AppA}
	It then follows from a scaling argument and the trace inequality applied to $\partial B_1(0)$ that there is a constant $c>0$ (independent of $R$) such that if $F_R$ is any $H^1(B_R(0))$ extension of $f_R$, then
	\begin{gather}
		\nonumber
		\|f_R\|^2_{W^{\frac{1}{2},2}(\partial B_R(0))}\leq c\left(\frac{\|F_R\|^2_{L^2(B_R(0))}}{R}+R\|\nabla F_R\|^2_{L^2(B_R(0))}\right).
	\end{gather}
	Now let $\rho_R$ be a smooth bump function with $0\leq \rho_R\leq 1$, $\rho_R=1$ on $\partial B_R(0)$ and $\operatorname{supp}(\rho_R)\cap \Omega=\emptyset$. We can then consider the extension $F_R:=\rho_Rf_R$ and using that $\rho_R$ has no support in $\Omega$ we find
	\begin{gather}
		\nonumber
		\|f_R\|^2_{W^{\frac{1}{2},2}(\partial B_R(0))}\leq c\left(\frac{\|f_R\|^2_{L^2(U_R)}}{R}+R\|\nabla f_R\|^2_{L^2(U_R)}+R\|f_R\nabla \rho_R\|^2_{L^2(U_R)}\right)
	\end{gather}
	where from now on $c>0$ denotes a generic constant which may change its value from line to line but is always independent of $R$. We now notice that it is possible to select $\rho_R$ such that $\|\nabla \rho_R\|_{L^{\infty}(U_R)}\leq \frac{c}{\operatorname{dist}(\partial B_R,\partial \widetilde{\Omega})}$ and so for large enough $R$ we get $\|\nabla \rho_R\|_{L^{\infty}(U_R)}\leq \frac{c}{R}$. We therefore arrive at the estimate
	\begin{gather}
		\nonumber
		\|f_R\|^2_{W^{\frac{1}{2},2}(\partial B_R(0))}\leq c\left(\frac{\|f_R\|^2_{L^2(U_R)}}{R}+R\|\nabla f_R\|^2_{L^2(U_R)}\right).
	\end{gather}
	Now, since we assume $\int_{U_R}f_Rd^3x=0$, we obtain from the Poincar\'{e}-Wirtinger inequality
	\begin{AppA}
		\label{AppAE11}
		\|f_R\|^2_{W^{\frac{1}{2},2}(\partial B_R(0))}\leq cR\|\nabla f_R\|^2_{L^2(U_R)}\left(1+\frac{1}{\lambda_N(U_R)R^2}\right)
	\end{AppA}
	where $\lambda_N(U_R)$ is the first non-zero Neumann eigenvalue of $U_R$. We then observe that by the scaling properties of the Neumann eigenvalues we have $$\frac{\lambda_N(U_R)R^2}{\lambda_N(B_R(0))R^2}=\frac{\lambda_N\left(B_1(0)\setminus \frac{\Omega}{R}\right)}{\lambda_N(B_1(0))}\rightarrow 1\text{ as }R\rightarrow\infty$$
	where $\frac{\Omega}{R}=\left\{\frac{x}{R}\mid x\in \Omega\right\}$ and the convergence follows from \cite[Theorem 3.1]{RauTay75}. Thus, $\lambda_N(U_R)R^2\geq c>0$ because $\lambda_N(B_R(0))R^2=\lambda_N((B_1(0)))$ is positive and consequently (\ref{AppAE11}) becomes
	\begin{gather}
		\nonumber
		\|f_R\|^2_{W^{\frac{1}{2},2}(\partial B_R(0))}\leq cR\|\nabla f_R\|^2_{L^2(U_R)}.
	\end{gather}
	We can insert this into (\ref{AppAE7}) and find
	\begin{AppA}
		\label{AppAE12}
		\|\nabla f_R\|_{L^2(U_R)}\leq c\sqrt{R}\|\mathcal{N}\cdot \widetilde{\Gamma}_p\|_{W^{-\frac{1}{2},2}(\partial B_R(0))}.
	\end{AppA}
	Finally, we observe that by definition of the dual norm we have
	\begin{gather}
		\nonumber
		\|\mathcal{N}\cdot \widetilde{\Gamma}_p\|_{W^{-\frac{1}{2},2}(\partial B_R(0))}=\sup_{h\in W^{\frac{1}{2},2}(\partial B_R(0))\setminus \{0\}}\frac{\int_{\partial B_R}(\mathcal{N}\cdot \widetilde{\Gamma}_p)hdS}{\|h\|_{W^{\frac{1}{2},2}(\partial B_R)}}.
	\end{gather}
	We then estimate for any fixed $h$
	$$\int_{\partial B_R(0)}(\mathcal{N}\cdot \widetilde{\Gamma}_p)hdS\leq \max_{\partial B_R(0)}|\widetilde{\Gamma}_p|\int_{\partial B_R(0)}|h|dS\leq cR^{-3}\|h\|_{L^1(\partial B_R(0))}$$
	where we used that $|\widetilde{\Gamma}_p(x)|\in \mathcal{O}(|x|^{-3})$ as $|x|\rightarrow\infty$. Then a scaling argument yields
	\begin{gather}
		\nonumber
		\sup_{h\in W^{\frac{1}{2},2}(\partial B_R(0))\setminus \{0\}}\frac{\|h\|_{L^1(\partial B_R(0))}}{\|h\|_{W^{\frac{1}{2},2}(\partial B_R(0))}}=R\sup_{h\in W^{\frac{1}{2},2}(\partial B_1(0))\setminus \{0\}}\frac{\|h\|_{L^1(\partial B_1(0))}}{\|h\|_{W^{\frac{1}{2},2}(\partial B_1(0))}}.
	\end{gather}
	We observe that $$\|h\|_{L^1(\partial B_1(0))}\leq \sqrt{4\pi}\|h\|_{L^2(\partial B_1(0))}\leq \sqrt{4\pi}\|h\|_{W^{\frac{1}{2},2}(\partial B_1(0))}$$ and so inserting our estimate for the dual norm into (\ref{AppAE12}) yields
	\begin{gather}
		\nonumber
		\|\widetilde{\Gamma}_p-\Gamma_R\|_{L^2(U_R)}=\|\nabla f_R\|_{L^2(U_R)}\leq cR^{-\frac{3}{2}}\text{ as }R\rightarrow\infty.
	\end{gather}
	Since $\operatorname{curl}(\widetilde{\Gamma}_p-\Gamma_R)=0$, it follows that $\|\widetilde{\Gamma}_p-\Gamma_R\|_{H(U_R,\operatorname{curl})}\in \mathcal{O}\left(R^{-\frac{3}{2}}\right)$ and so the continuity of the twisted tangential trace with respect to this norm, cf. \cite[Theorem 2.11]{GR86}, implies that $$\|\widetilde{\Gamma}_p\times \mathcal{N}-\Gamma_R\times \mathcal{N}\|_{W^{-\frac{1}{2},2}(\Sigma)}\in \mathcal{O}\left(R^{-\frac{3}{2}}\right).$$
	This completes the proof.
\end{proof}
We now come to the characterisation of the kernel of the Biot-Savart operator.
\begin{proof}[Proof of \Cref{S2T5}]
	The fact that $j_0=\mathcal{N}\times \widetilde{\Gamma}_p$ lies in the kernel follows either directly from the virtual-casing principle, cf. (\ref{AppAExtraExtraExtra}), or may be seen as a consequence of the formula (\ref{AppAE2}) in combination with \cite[Theorem 2.13]{Gerner26KernelImage}. The fact that the kernel is $1$-dimensional follows from \cite[Theorem 5.1 \& Remark C.2]{G24}.
\end{proof}
\begin{rem}
	We point out for completeness that the domain of $\operatorname{BS}_{\Sigma}$ in the statement of \Cref{S2T5} is taken to be the completion of the space of square integrable, div-free current distributions on $\Sigma$ with respect to the $W^{-\frac{1}{2},2}$-norm, cf. \cite[Appendix C]{G24}, which in particular includes all square integrable, div-free current distributions on $\Sigma$.
\end{rem}
\section{Proof of the analytic current formula}
\label{AppB}
We assume throughout this section that $\Sigma\subset \mathbb{R}^3$ is a smooth toroidal surface.

Before we come to the proof of the main result \Cref{S2T7} we show that the BVP (\ref{S2E3}) indeed admits a solution, i.e. we first prove \Cref{S2L6}.
\begin{proof}[Proof of \Cref{S2L6}]
	We recall that we denote by $\Omega$ the finite region enclosed by $\Sigma$ and that $B$ is a div-free field which is square integrable. This implies that $\int_{\partial\Omega}\mathcal{N}\cdot BdS=0$ and thus there exists an $H^1(\Omega)$ solution to the auxiliary BVP
	\begin{AppB}
		\label{AppBE1}
		\Delta h=0\text{ in }\Omega\text{, }\mathcal{N}\cdot \nabla h=\mathcal{N}\cdot B.
	\end{AppB}
	We can then define the following two spaces
	\begin{AppB}
		\label{AppBE2}
		\mathcal{H}_{\operatorname{ex}}(\Omega):=\left\{\nabla f\mid f\in H^1(\Omega)\text{ and }\Delta f=0\text{ in }\Omega\right\},
		\end{AppB}
		\begin{gather}
		\nonumber
		\mathcal{D}:=\left\{\mathcal{N}\cdot \nabla f \mid \nabla f\in \mathcal{H}_{\operatorname{ex}}(\Omega)\right\}.
	\end{gather}
	We have thus shown that $\mathcal{N}\cdot B\in \mathcal{D}$ and it follows from (\cite[Lemma 4.2]{Gerner26KernelImage}) that $$\frac{\operatorname{Id}}{2}+w^{\operatorname{Tr}}_{\Omega}:\mathcal{D}\rightarrow\mathcal{D}$$
	is a well-defined, linear isomorphism. We conclude that $$\left(\left(\frac{\operatorname{Id}}{2}+w^{\operatorname{Tr}}_{\Omega}\right)^{-1}-\operatorname{Id}\right)(\mathcal{N}\cdot B)\in \mathcal{D}$$ and thus the BVP (\ref{S2E3}) admits a solution because the Neumann data satisfies the solvability condition for the Neumann problem.
\end{proof}
We are ready to prove the main result of this manuscript.
\begin{proof}[Proof of \Cref{S2T7}]
	\underline{Step 1: $\operatorname{BS}_{\Sigma}(B\times \mathcal{N}+\nabla f\times \mathcal{N})=B-\operatorname{BS}_P(J)$:} We start by computing $\operatorname{BS}_{\Sigma}(B\times \mathcal{N})$. It follows from the virtual-casing principle \cite{Hanson15},\cite[Lemma 5.5]{G24} that we have the identity
	\begin{gather}
		\nonumber
		\operatorname{BS}_{\Sigma}(B\times \mathcal{N})(x)=B(x)-\operatorname{BS}_{\Omega}(\operatorname{curl}(B))(x)-\nabla_x\frac{\int_{\Omega}B(y)\cdot \frac{x-y}{|x-y|^3}d^3y}{4\pi}\text{ in }\Omega.
	\end{gather}
	We then notice that if we let $h$ denote a solution to the BVP (\ref{AppBE1}), then $\nabla h\in \mathcal{H}_{\operatorname{ex}}(\Omega)$, recall (\ref{AppBE2}), and $B-\nabla h$ is div-free and tangent to the boundary. Hence, $$\int_{\Omega}B(y)\cdot \frac{x-y}{|x-y|^3}d^3y=\int_{\Omega}(\nabla h)(y)\cdot \frac{x-y}{|x-y|^3}d^3y$$
	where we used that $-\nabla_x|x-y|^{-1}=\frac{x-y}{|x-y|^3}$ and that $B-\nabla h$ annihilates all gradient fields. We can then define the following operator
	\begin{AppB}
		\label{AppBE3}
		T:\mathcal{H}_{\operatorname{ex}}(\Omega)\rightarrow\mathcal{H}_{\operatorname{ex}}(\Omega)\text{, }\nabla h\mapsto T(\nabla h)(x):=\nabla_x\frac{\int_{\Omega}\nabla h(y)\cdot \frac{x-y}{|x-y|^3}d^3y}{4\pi}
	\end{AppB}
	which according to \cite[Corollary 3.2]{Gerner26KernelImage} is a well-defined linear operator. We therefore arrive at the identity
	\begin{AppB}
		\label{AppBE4}
		\operatorname{BS}_{\Sigma}(B\times \mathcal{N})=B-\operatorname{BS}_{\Omega}(\operatorname{curl}(B))-T(\nabla h).
	\end{AppB}
	We now apply the virtual-casing principle to $T(\nabla h)$ and find
	$$\operatorname{BS}_{\Sigma}(T(\nabla h)\times \mathcal{N})=T(\nabla h)-T^2(\nabla h)\Leftrightarrow T(\nabla h)=\operatorname{BS}_{\Sigma}(T(\nabla h)\times \mathcal{N})+T^2(\nabla h).$$
	It then follows by induction, that for every fixed $n\in \mathbb{N}$ we have $$T(\nabla h)=\operatorname{BS}_{\Sigma}\left(\sum_{k=1}^nT^k(\nabla h)\times \mathcal{N}\right)+T^{n+1}(\nabla h).$$
	We can insert this into (\ref{AppBE4}) and obtain
	\begin{AppB}
		\label{AppBE5}
		\operatorname{BS}_{\Sigma}\left(\left(B+\sum_{k=1}^nT^k(\nabla h)\right)\times \mathcal{N}\right)=B-\operatorname{BS}_{\Omega}(\operatorname{curl}(B))-T^{n+1}(\nabla h)\text{ in }\Omega\text{ for all }n\in \mathbb{N}.
	\end{AppB}
	It then follows from \cite[Corollary 3.2]{Gerner26KernelImage} that the operator $T$ is a contraction with respect to the $L^2(\Omega)$-norm. That means there exists some $0<\lambda<1$ such that $$\|T(\nabla g)\|_{L^2(\Omega)}\leq \lambda \|\nabla g\|_{L^2(\Omega)}\text{ for all }\nabla g\in \mathcal{H}_{\operatorname{ex}}(\Omega).$$
	This immediately implies $$\|T^{n+1}(\nabla h)\|_{L^2(\Omega)}\leq \lambda^{n+1}\|\nabla h\|_{L^2(\Omega)}\rightarrow 0\text{ as }n\rightarrow\infty.$$
	In addition, $\sum_{k=1}^nT^k(\nabla h)$ is a Cauchy-sequence in $\mathcal{H}_{\operatorname{ex}}(\Omega)$ with respect to $L^2(\Omega)$. This can be seen by computing
	\begin{gather}
		\nonumber
		\left\|\sum_{k=1}^nT^k(\nabla h)-\sum_{k=1}^mT^k(\nabla h)\right\|_{L^2(\Omega)}=\left\|\sum_{k=m+1}^nT^k(\nabla h)\right\|_{L^2(\Omega)}\leq \sum_{k=m+1}^n\|T^k(\nabla h)\|_{L^2(\Omega)}
		\\
		\nonumber
		\leq \|\nabla h\|_{L^2(\Omega)}\sum_{k=m+1}^n\lambda^k\rightarrow 0\text{ as }m,n\rightarrow\infty
	\end{gather}
	where the latter follows because $\sum_{k=1}^n\lambda^k$ converges and is thus a real Cauchy-sequence. We conclude that $$\sum_{k=1}^{n}T^k(\nabla h)\rightarrow \nabla f\in\mathcal{H}_{\operatorname{ex}}(\Omega)$$
	for a suitable $\nabla f\in \mathcal{H}_{\operatorname{ex}}(\Omega)$. We notice that $\|\nabla g\|_{H(\Omega,\operatorname{curl})}=\|\nabla g\|_{L^2(\Omega)}$ for all gradient fields because $\operatorname{curl}(\nabla g)=0$ and therefore, due to the continuity of the twisted tangential trace \cite[Theorem 2.11]{GR86}, we find $$\sum_{k=1}^nT^k(\nabla h)\times \mathcal{N}\rightarrow \nabla f\times \mathcal{N}\text{ in }W^{-\frac{1}{2},2}.$$
	According to \cite[Lemma C.1]{G24} the Biot-Savart operator $\operatorname{BS}_{\Sigma}$ is continuous with respect to the $W^{-\frac{1}{2},2}$-norm on its domain and with respect to the $L^2(\Omega)$ on its range. We may therefore take the limit $n\rightarrow\infty$ in (\ref{AppBE5}) and arrive at the identity
	\begin{AppB}
		\label{AppBE6}
		\operatorname{BS}_{\Sigma}(B\times \mathcal{N}+\nabla f\times \mathcal{N})=B-\operatorname{BS}_{\Omega}(\operatorname{curl}(B))\text{ in }\Omega.
	\end{AppB}
	We will now characterise $f$ as a solution to the Neumann boundary value problem (\ref{S2E3}). It follows immediately, since $\nabla f\in \mathcal{H}_{\operatorname{ex}}(\Omega)$, that $\Delta f=0$ in $\Omega$ and so we are left with characterising its normal trace. To this end we recall that $$\nabla f=\sum_{k=1}^{\infty}T^k(\nabla h)$$ where the convergence takes place in $L^2(\Omega)$. Since $\sum_{k=1}^nT^k(\nabla h)\in \mathcal{H}_{\operatorname{ex}}(\Omega)$ it follows that its $H(\Omega,\operatorname{div})$-norm coincides with its $L^2(\Omega)$-norm and thus, due to the continuity of the normal trace \cite[Theorem 2.5]{GR86}, we see that
	\begin{AppB}
		\label{AppBE7}
		\mathcal{N}\cdot \sum_{k=1}^nT^k(\nabla h)\rightarrow \mathcal{N}\cdot \nabla f\text{ in }W^{-\frac{1}{2},2}.
	\end{AppB}
	To compute the trace we make use of the following fact, \cite[Lemma 4.1]{Gerner26KernelImage},
	$$\mathcal{N}\cdot T(\nabla g)=\left(\frac{\operatorname{Id}}{2}-w^{\operatorname{Tr}}_{\Omega}\right)(\mathcal{N}\cdot \nabla g)\text{ for all }\nabla g\in \mathcal{H}_{\operatorname{ex}}(\Omega).$$ An induction argument shows that this implies $$\mathcal{N}\cdot T^k(\nabla g)=\left(\frac{\operatorname{Id}}{2}-w^{\operatorname{Tr}}_{\Omega}\right)^k(\mathcal{N}\cdot \nabla g)\text{ for all }\nabla g\in \mathcal{H}_{\operatorname{ex}}(\Omega).$$
	Therefore we find $$\mathcal{N}\cdot \sum_{k=1}^nT^k(\nabla h)=\sum_{k=1}^n\left(\frac{\operatorname{Id}}{2}-w^{\operatorname{Tr}}_{\Omega}\right)^k(\mathcal{N}\cdot \nabla h).$$
	It then follows from \cite[Proof of Theorem 2.9]{Gerner26KernelImage} that $$\sum_{k=0}^{n}\left(\frac{\operatorname{Id}}{2}-w^{\operatorname{Tr}}_{\Omega}\right)^k\rightarrow \left(\frac{\operatorname{Id}}{2}+w^{\operatorname{Tr}}_{\Omega}\right)^{-1}$$ in the strong $W^{-\frac{1}{2},2}$-operator norm. In particular, this implies that $$\sum_{k=1}^{n}\left(\frac{\operatorname{Id}}{2}-w^{\operatorname{Tr}}_{\Omega}\right)^k(\mathcal{N}\cdot \nabla h)\rightarrow\left(\left(\frac{\operatorname{Id}}{2}+w^{\operatorname{Tr}}_{\Omega}\right)^{-1}-\operatorname{Id}\right)(\mathcal{N}\cdot \nabla h)\text{ in }W^{-\frac{1}{2},2}.$$ We recall, (\ref{AppBE1}), that $\mathcal{N}\cdot \nabla h=\mathcal{N}\cdot B$ and it then follows from (\ref{AppBE7}) that
	\begin{gather}
		\nonumber
		\mathcal{N}\cdot \nabla f=\left(\left(\frac{\operatorname{Id}}{2}+w^{\operatorname{Tr}}_{\Omega}\right)^{-1}-\operatorname{Id}\right)(\mathcal{N}\cdot B)
	\end{gather}
	as claimed. Finally, in our application we know that $B$ is a vacuum field outside of the plasma region $P$, which implies $\operatorname{BS}_{\Omega}(\operatorname{curl}(B))=\operatorname{BS}_P(J)$ on all of $\mathbb{R}^3$ where $J=\operatorname{curl}(B)$. So (\ref{AppBE6}) becomes
	\begin{gather}
		\nonumber
		\operatorname{BS}_{\Sigma}(B\times \mathcal{N}+\nabla f\times \mathcal{N})=B-\operatorname{BS}_P(J)\text{ in }\Omega
	\end{gather}
	as was to be shown.
	\newline
	\newline
	\underline{Step 2: Concluding the proof:} We recall that according to \Cref{S2T5}, if $\widetilde{\Gamma}_p\in \mathcal{H}_N(\widetilde{\Omega})$ is the unique element with unit poloidal circulation along a fixed poloidal curve, then $\operatorname{BS}_{\Sigma}(\alpha \widetilde{\Gamma}_p\times \mathcal{N})=0$ in $\Omega$ for all $\alpha\in \mathbb{R}$. Consequently we are left with proving that the choice $\alpha=\int_{\sigma_p}B$ leads to the most poloidal field lines of the current $$j=B\times \mathcal{N}+\nabla f\times \mathcal{N}+\alpha \widetilde{\Gamma}_p\times \mathcal{N}.$$
	To quantify this property we make use of the notions of average poloidal and toroidal windings, cf. \cite[Definition 2.11 \& Lemma 2.12]{Ger26SurfaceHelicity}, which can be defined as follows: Fix a toroidal loop $\sigma_t$ on $\Sigma$ and let $\sigma_p$ be our fixed poloidal loop. Then there exists a unique basis $\gamma_p,\gamma_t$ of the space $$\mathcal{H}(\Sigma):=\{\gamma \in \mathcal{V}(\Sigma)\mid \operatorname{div}_{\Sigma}(\gamma)=0=\operatorname{curl}_{\Sigma}(\gamma)\},$$
	where $\mathcal{V}(\Sigma)$ denotes the space of smooth tangent vector fields on $\Sigma$, satisfying
	\begin{AppB}
		\label{AppBE8}
		\int_{\sigma_p}\gamma_p=1=\int_{\sigma_t}\gamma_t\text{ and }\int_{\sigma_p}\gamma_t=0=\int_{\sigma_t}\gamma_p.
	\end{AppB}
	The average poloidal windings $\overline{P}(j)$ and average toroidal windings $\overline{Q}(j)$ of a div-free current distribution $j$ on $\Sigma$ can then be defined by
	\begin{gather}
		\nonumber
		\overline{P}(j)=\frac{\int_{\Sigma}\gamma_p\cdot jdS}{|\Sigma|}\text{, }\overline{Q}(j)=\frac{\int_{\Sigma}\gamma_t\cdot jdS}{|\Sigma|}.
	\end{gather}
	Clearly $\overline{P}$ and $\overline{Q}$ are linear and it also follows immediately that $\overline{Q}(\nabla f\times \mathcal{N})=0=\overline{P}(\nabla f\times \mathcal{N})$ since the space of harmonic fields $\mathcal{H}(\Sigma)$ is $L^2(\Sigma)$-orthogonal to the co-exact vector fields. It follows further from the fact that $B$ is curl-free outside of $P$, that the part $B^{\parallel}$ of $B|_{\Sigma}$ which is tangent of $\Sigma$ satisfies $\operatorname{curl}_{\Sigma}(B^{\parallel})=0$. It follows from the Hodge decomposition theorem that we can write $$B^{\parallel}=\nabla_{\Sigma}g+a\gamma_p+b\gamma_t$$ for a suitable smooth function $g$ on $\Sigma$ and $a,b\in \mathbb{R}$. By the defining properties of $\gamma_p$ and $\gamma_t$ we find $$a=\int_{\sigma_p}B\text{ and }b=\int_{\sigma_t}B$$
	where we used that $\sigma_p$ and $\sigma_t$ are tangent to $\Sigma$. We then compute
	\begin{gather}
		\nonumber
		|\Sigma|\overline{Q}(B\times \mathcal{N})=|\Sigma|\overline{Q}(B^{\parallel}\times \mathcal{N})=\left(\int_{\sigma_p}B\right)\int_{\Sigma}(\gamma_p\times \mathcal{N})\cdot \gamma_tdS
	\end{gather}
	where we used that $\gamma_t$ is $L^2$-orthogonal to the co-exact fields and that $\gamma_t\times \mathcal{N}$ and $\gamma_t$ are everywhere pointwise orthogonal. We arrive at
	\begin{AppB}
		\label{AppBE9}
		\overline{Q}(B\times \mathcal{N}\times \nabla f\times \mathcal{N})=\left(\int_{\sigma_p}B\right)\frac{\int_{\Sigma}(\gamma_p\times \mathcal{N})\cdot \gamma_tdS}{|\Sigma|}.
	\end{AppB}
	Now in a similar fashion, since $\operatorname{curl}(\widetilde{\Gamma}_p)=0$, we find $$\widetilde{\Gamma}_p|_{\Sigma}=\nabla_{\Sigma}\widetilde{g}+\gamma_p$$
	for a suitable smooth function $\widetilde{g}$ on $\Sigma$ and where we used that $\widetilde{\Gamma}_p$ is tangent to $\Sigma$, that $\int_{\sigma_t}\widetilde{\Gamma}_p=0$ according to the representation formula (\ref{AppAE2}) and where we used that $\int_{\sigma_p}\widetilde{\Gamma}_p=1$ by definition of $\widetilde{\Gamma}_p$. We therefore compute
	$$\overline{Q}(\widetilde{\Gamma}_p\times \mathcal{N})=\frac{\int_{\Sigma}(\gamma_p\times \mathcal{N})\cdot \gamma_tdS}{|\Sigma|}$$ and hence (\ref{AppBE9}) becomes
	\begin{gather}
		\nonumber
		\overline{Q}\left(B\times \mathcal{N}+\nabla f\times \mathcal{N}-\left(\int_{\sigma_p}B\right)\widetilde{\Gamma}_p\times \mathcal{N}\right)=0.
	\end{gather}
	Consequently, the choice $\alpha=\int_{\sigma_p}B$ leads to the current of most poloidal field lines.
\end{proof}
\begin{rem}
	\label{AppBR1}
	\begin{enumerate}
		\item Since $B$ is curl-free within the region between the CWS $\Sigma$ and the plasma boundary $\partial P$, it follows that if $\widetilde{\sigma}_p$ is any poloidal loop on $\partial P$ (whose orientation is compatible with that of the poloidal loop $\sigma_p$ on $\Sigma$) then $\int_{\sigma_p}B=\int_{\widetilde{\sigma}_p}B$ and therefore the value of $\alpha$ in \Cref{S2T7} can be computed directly from the knowledge of $B$ on the plasma boundary alone.
		\item We have shown that $\overline{Q}\left(B\times \mathcal{N}+\nabla f\times \mathcal{N}-\left(\int_{\sigma_p}B\right)\widetilde{\Gamma}_p\times \mathcal{N}\right)=0$. One can similarly verify that $$\overline{P}\left(B\times \mathcal{N}+\nabla f\times \mathcal{N}-\left(\int_{\sigma_p}B\right)\widetilde{\Gamma}_p\times \mathcal{N}\right)=\left(\int_{\sigma_t}B\right)\overline{P}(\gamma_t\times \mathcal{N})$$
		and it is always the case that $\overline{P}(\gamma_t\times \mathcal{N})\neq 0$. Hence, as long as $\int_{\sigma_t}B\neq 0$, the corresponding current $B\times \mathcal{N}+\nabla f\times \mathcal{N}-\left(\int_{\sigma_p}B\right)\widetilde{\Gamma}_p\times \mathcal{N}$ will have a poloidal component.
	\end{enumerate} 
\end{rem}
\section{Conditions for poloidal current field lines}
\label{AppC}
We start by considering a very general situation.
\begin{prop}
\label{AppCP1}
Let $\Omega\subset\mathbb{R}^3$ be a (possibly unbounded) smooth domain such that $\partial \Omega$ is a $2$-torus. Let $X$ be a smooth vector field on $\overline{\Omega}$ such that $X\parallel \partial \Omega$ and such that $X|_{\partial\Omega}$ does not vanish identically. Then the following two sets of conditions are equivalent:
\begin{enumerate}
	\item $X|_{\partial\Omega}$ is no-where vanishing.
	\item There exist smooth vector fields $Y,Z$ on $\Sigma$ such that the following holds
	\begin{enumerate}
		\item $\operatorname{div}(X)-\mathcal{N}\cdot \nabla_{\mathcal{N}}X=Y\cdot X\text{ on }\partial\Omega$,
		\item $\mathcal{N}\cdot \operatorname{curl}(X)=Z\cdot X\text{ on }\partial\Omega$
	\end{enumerate}
	where $\operatorname{div}$ and $\operatorname{curl}$ denote the standard div- and curl operators in $3$-space and $\mathcal{N}\cdot \nabla_{\mathcal{N}}X=((DX)\cdot \mathcal{N})\cdot\mathcal{N}$ where $DX$ denotes the Jacobian of $X$ and $\mathcal{N}$ is the outward unit normal on $\partial\Omega$.
\end{enumerate}
\end{prop}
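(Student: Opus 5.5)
The plan is to reduce both conditions to intrinsic statements about the tangent field $x:=X|_{\partial\Omega}$ on the torus $\partial\Omega$. The first step is to check, using $X\cdot\mathcal{N}=0$ on $\partial\Omega$, that
\begin{gather}
\nonumber
\operatorname{div}(X)-\mathcal{N}\cdot\nabla_{\mathcal{N}}X=\operatorname{div}_{\partial\Omega}(x),
\end{gather}
which is the intrinsic surface divergence. Similarly, $\mathcal{N}\cdot\operatorname{curl}(X)=\operatorname{curl}_{\partial\Omega}(x)$ is the intrinsic scalar curl (Stokes' theorem on small discs in $\partial\Omega$ gives this directly). Both are standard computations in a local orthonormal frame $e_1,e_2,\mathcal{N}$ adapted to the surface. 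After this, condition 2 reads: there are smooth tangent fields $Y,Z$ with $\operatorname{div}_{\partial\Omega}(x)=Y\cdot x$ and $\operatorname{curl}_{\partial\Omega}(x)=Z\cdot x$. Only the tangential parts of $Y,Z$ matter, so we may assume $Y,Z$ are tangent.

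For $1\Rightarrow 2$: if $x$ has no zeros, set
\begin{gather}
\nonumber
Y:=\frac{\operatorname{div}_{\partial\Omega}(x)}{|x|^2}\,x,\qquad Z:=\frac{\operatorname{curl}_{\partial\Omega}(x)}{|x|^2}\,x .
\end{gather}
These are smooth and satisfy the two identities trivially.

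For $2\Rightarrow 1$ we argue by contradiction. Suppose $x$ vanishes somewhere but not identically. Write $Y^\flat,Z^\flat$ for the metric-dual one-forms and work with the one-form $\xi:=x^\flat$, whose Hodge dual $\star\xi$ corresponds to $\mathcal{N}\times x$. The hypotheses say that $d\star\xi=\langle Y,x\rangle\,dA$ and $d\xi=\langle Z,x\rangle\,dA$. In local isothermal coordinates $(u,v)$ with conformal factor $e^{2\lambda}$, put $x=(a,b)$ and $w:=a-ib$. The combination of the divergence and curl equations is a Cauchy–Riemann system with zeroth-order perturbation,
\begin{gather}
\nonumber
\partial_{\bar z}w=\alpha w+\beta\bar w,
\end{gather}
where $\alpha,\beta$ are smooth (bounded) functions built from $Y,Z,\lambda$. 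This is precisely where the hypothesis that the right-hand sides are proportional to $x$ enters.

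By the Bers–Vekua similarity principle, locally $w=e^{s}h$ with $s$ continuous (Hölder) and $h$ holomorphic. Hence the zeros of $w$ are isolated unless $w\equiv0$ locally, and each isolated zero has positive multiplicity, i.e. positive index as a zero of the complex function $w$. The zero set is therefore open-closed away from isolated points. Since the torus is connected and $x\not\equiv0$, the unique continuation part of the similarity principle forces all zeros to be isolated, each of positive index for $w$.

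The index of $x$ as a vector field equals minus the index of $w=a-ib$, since conjugation flips the winding. So every zero of $x$ has strictly negative index. Poincaré–Hopf on $T^2$ gives a total index of $\chi(T^2)=0$, so there are no zeros at all, which is the contradiction we wanted.

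The main obstacle is the sign bookkeeping: whether the right combination is $a-ib$ or $a+ib$ given the divergence/curl sign conventions, and confirming that the perturbation terms involving the conformal factor are genuinely zeroth order (they come from $\operatorname{div}_{\partial\Omega}$ and $\operatorname{curl}_{\partial\Omega}$ in conformal coordinates, roughly $e^{-2\lambda}(\partial_u(e^{\cdot}a)+\dots)$). If the $\bar z$ formulation gets messy, the fallback is to absorb $\lambda$ by working with $\tilde w=e^{c\lambda}w$ for a suitable constant $c$, which doesn't change zeros or indices. One could also treat the unbounded-domain case identically, since only $\partial\Omega\cong T^2$ is used.
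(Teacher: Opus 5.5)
Your proof is correct and has the same skeleton as the paper's. Both reduce (ii) to $\operatorname{div}_{\partial\Omega}(x)=Y\cdot x$ and $\operatorname{curl}_{\partial\Omega}(x)=Z\cdot x$, show that the zeros of $x$ are isolated with index $\leq -1$, and conclude by Poincar\'{e}--Hopf with $\chi(T^2)=0$.

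The routes differ in the local analysis for (ii)$\Rightarrow$(i). The paper quotes three separate external results:
\begin{enumerate}
\item finite-order vanishing, from Aronszajn--Krzywicki--Szarski \cite{AKS62};
\item discreteness of the zero set of solutions of first-order elliptic systems, from B\"ar \cite{B99};
\item an index lemma from \cite{GerNB26ElectricCurrentArXiv} giving $\operatorname{ind}_p(v)=-(m(p)+1)$.
\end{enumerate}
Your Bers--Vekua representation $w=e^{s}h$ gives all three at once in two dimensions. Unique continuation follows because the interior of the zero set is open, and closed by the identity theorem for $h$ on a disc around any limit point. Isolatedness then follows. The index sign follows because $e^{s}$ has winding zero, $h$ has winding equal to its order $m\geq 1$, and conjugation flips it. So your argument is more self-contained and elementary. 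The paper's tools are dimension-independent but need the separate index computation.

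Your sign and conformal-factor worries resolve cleanly. With $A=e^{2\lambda}a$, $B=e^{2\lambda}b$ and $w=A-iB$, one has exactly $2\partial_{\bar z}w=e^{2\lambda}\left(\operatorname{div}_{\partial\Omega}x-i\operatorname{curl}_{\partial\Omega}x\right)$. Even for $w=a-ib$, the $\lambda$-derivatives contribute only zeroth-order terms. A change of chart orientation only flips the sign of the curl, which $Z$ absorbs.

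The sentence ``the zero set is therefore open-closed away from isolated points'' is muddled. Replace it with the open--closed argument on the interior of the zero set described above.

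For (i)$\Rightarrow$(ii), your choice $Y=\operatorname{div}_{\partial\Omega}(x)\,x/|x|^2$ is the intrinsic counterpart of the paper's ambient computation with $f=|X|^{-2}$, and it is simpler.
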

\begin{proof}[Proof of \cref{AppCP1}]
	\underline{(i) $\Rightarrow$ (ii):} Suppose that $X|_{\partial\Omega}$ does not vanish. Then there is some neighbourhood around $\partial \Omega$ on which the function $f:=|X|^{-2}$ is well-defined and smooth. We then compute
	\begin{gather}
		\nonumber
		\operatorname{div}(X)=\operatorname{div}(|X|^2fX)=|X|^2\operatorname{div}(fX)+f\nabla (|X|^2)\cdot X\text{, }
		\\
		\nonumber
		\mathcal{N}\cdot \nabla_{\mathcal{N}}X=\mathcal{N}\cdot \nabla_{\mathcal{N}}(|X|^2fX)=|X|^2\mathcal{N}\cdot \nabla_{\mathcal{N}}(fX)+(\mathcal{N}\cdot \nabla |X|^2)f\mathcal{N}\cdot X=|X|^2\mathcal{N}\cdot \nabla_{\mathcal{N}}(fX)
	\end{gather}
	where we used that $\mathcal{N}\cdot X=0$ in the last step. Setting $$Y:=X(\operatorname{div}(fX)-\mathcal{N}\cdot \nabla_{\mathcal{N}}(fX))+f\nabla |X|^2$$ we see that $Y$ is smooth and $$\operatorname{div}(X)-\mathcal{N}\cdot \nabla_{\mathcal{N}}X=X\cdot Y$$
	as required.
	
	Similarly one computes
	\begin{gather}
		\nonumber
		\operatorname{curl}(X)=\operatorname{curl}(|X|^2fX)=|X|^2\operatorname{curl}(fX)+f\nabla (|X|^2)\times X
		\\
		\nonumber
		\Rightarrow \mathcal{N}\cdot \operatorname{curl}(X)=|X|^2\mathcal{N}\cdot \operatorname{curl}(fX)+f\mathcal{N}\cdot \left(\nabla |X|^2\times X\right)=|X|^2\mathcal{N}\cdot \operatorname{curl}(fX)+f\left(\mathcal{N}\times \nabla |X|^2\right)\cdot X
	\end{gather}
	where we used the scalar triple product rule in the last step. Thus, $$Z:=\left(\mathcal{N}\cdot \operatorname{curl}(fX)\right)X+f\left(\mathcal{N}\times \nabla |X|^2\right)$$
	is smooth and satisfies $\mathcal{N}\cdot \operatorname{curl}(X)=Z\cdot X$ and the implication is shown.
	\newline
	\newline
	\underline{(ii) $\Rightarrow$ (i):} Let us set $v:=X|_{\partial\Omega}$. The imposed conditions on $X$ imply
	\begin{AppC}
		\label{AppCE1}
		\operatorname{curl}_{\Sigma}(v)=Z\cdot v\text{ and }\operatorname{div}_{\Sigma}(v)=Y\cdot v\text{ on }\partial\Omega.
	\end{AppC}
	Since $Y$ and $Z$ are smooth it follows from \cite{AKS62} that either $v$ is identically zero on $\partial\Omega$ or otherwise only has zeros of finite order. It then follows from \cite[Main theorem]{B99} that if $v$ is not identically zero, then its zero set is discrete. Then the proof of \cite[Lemma 5.4]{GerNB26ElectricCurrentArXiv} applies mutatis mutandis and we conclude that if $p\in \partial\Omega$ is a zero of $v$, then the index of $v$ at $p$ satisfies $\operatorname{ind}_p(v)=-(m(p)+1)$ where $m(p)$ denotes the multiplicity of the zero. In particular $\operatorname{ind}_p(v)\leq -1$ at every zero $p$ of $v$. Then the Poincar\'{e}-Hopf theorem, \cite[Section 6]{M65}, implies that $v$ cannot have any zeros because the Euler characteristic of $\partial\Omega$ is zero.
\end{proof}
\begin{rem}
	\label{AppCR2}
	A related result has been proven in \cite[Theorem 8]{PPS22} which implies that if $X$ is a smooth vector field on $\overline{\Omega}$, tangent to $\partial\Omega$ (as usual $\partial\Omega$ is assumed to be a torus), and there exists a smooth function $u$ on $\partial\Omega$ such that $\operatorname{div}(X)-\mathcal{N}\cdot \nabla_{\mathcal{N}}X=X\cdot \nabla u$ and $\mathcal{N}\cdot \operatorname{curl}(X)=0$, then either $X$ vanishes identically on $\partial\Omega$ or otherwise is no-where vanishing and semi-rectifiable. The latter means that there is a smooth, positive function $f>0$ on $\partial\Omega$ and a global coordinate system on $\partial\Omega$ such that $fX$ is represented by $a\partial_{\phi}+b\partial_{\theta}$ in this coordinate system for suitable $a,b\in \mathbb{R}$. Thus, while the assumptions in \cite{PPS22} are more restrictive, they also provide a stronger conclusion.
\end{rem}
We obtain the following corollary of relevance in plasma confinement.
\begin{cor}
	\label{AppCC3}
	Let $P\subset\mathbb{R}^3$ be a bounded smooth solid torus and suppose that $\partial P$ is analytic. Let $B$ be a smooth vector field on $\overline{P}$ solving the plasma equilibrium equation (\ref{S1EExtra}) and assume that $B|_{\partial P}$ is an analytic vector field on $\partial P$ with $\int_{\sigma_t}B\neq 0$ for some toroidal loop $\sigma_t$ on $\partial P$. Let $U$ denote the open neighbourhood of $\overline{P}$ on which $B$ admits a (unique) compatible vacuum extension, cf. \Cref{S2T1}, and let $\Sigma\subset U$ be a smooth toroidal surface such that $\overline{P}\subset \Omega$ and such that $\sigma_t$ is not contractible within $\Omega$, where $\Omega$ denotes the finite region bounded by $\Sigma$. Assume further that $B\parallel \Sigma$, where we set $B$ to its vacuum extension outside of $P$. Then the following two conditions are equivalent for the current $j_p$ as defined in (\ref{S4E2}):
	\begin{enumerate}
		\item All field lines of $j_p$ are closed poloidal loops on $\Sigma$.
		\item There exists a smooth vector field $Y$ on $\Sigma$, such that $\mathcal{N}\cdot \nabla_{\mathcal{N}}B_{\operatorname{eff}}=Y\cdot B_{\operatorname{eff}}$ on $\partial\Omega$, where $B_{\operatorname{eff}}=B-\left(\int_{\sigma_p}B\right)\widetilde{\Gamma}_p$ and $\sigma_p$, $\widetilde{\Gamma}_p$ are as in \Cref{S2P4} and $\mathcal{N}$ is the outward unit normal on $\Sigma$ with respect to $U\setminus \overline{\Omega}$.
	\end{enumerate}
\end{cor}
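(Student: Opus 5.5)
We apply \Cref{AppCP1} to $X:=B_{\operatorname{eff}}$ on the exterior domain $\widetilde{\Omega}=U\setminus\overline{\Omega}$ near $\Sigma$, with $\mathcal{N}$ the outward unit normal with respect to that region. The plan has four steps.

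\textbf{Step 1: setup.} Since $B\parallel\Sigma$, the BVP (\ref{S2E3}) has zero Neumann data, so $\nabla f=0$ and $B_{\operatorname{eff}}=B-\left(\int_{\sigma_p}B\right)\widetilde{\Gamma}_p$. On a neighbourhood of $\Sigma$ outside $\overline{\Omega}$, both $B$ (the vacuum extension) and $\widetilde{\Gamma}_p$ are curl- and div-free and tangent to $\Sigma$. Hence $X:=B_{\operatorname{eff}}$ is smooth up to $\Sigma$ from that side, satisfies $X\parallel\Sigma$, and has
\[
\operatorname{div}(X)=0,\qquad \mathcal{N}\cdot\operatorname{curl}(X)=0 .
\]
Condition (b) of \Cref{AppCP1} therefore holds with $Z=0$. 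Condition (a) reduces to $-\mathcal{N}\cdot\nabla_{\mathcal{N}}X=Y\cdot X$, which matches (ii) after replacing $Y$ by $-Y$.

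\textbf{Step 2: $X|_{\Sigma}\not\equiv 0$.} This is where $\int_{\sigma_t}B\neq0$ enters. Write $\widetilde{\Gamma}_p|_{\Sigma}=\nabla_\Sigma\widetilde g+\gamma_p$ with $\int_{\sigma_t}\widetilde{\Gamma}_p=0$, as in the proof of \Cref{S2T7}. Then
\[
\int_{\sigma_t}B_{\operatorname{eff}}=\int_{\sigma_t}B .
\]
Since $\sigma_t$ is not contractible in $\Omega$ and $B$ is curl-free between $\partial P$ and $\Sigma$, the toroidal loop on $\Sigma$ is homologous in $U\setminus P$ to $\sigma_t\subset\partial P$. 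Stokes gives equal circulation, which is nonzero. So $X|_\Sigma\not\equiv0$.

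\textbf{Step 3: apply \Cref{AppCP1}.} The proposition now gives: $B_{\operatorname{eff}}|_\Sigma$ is nowhere vanishing if and only if (ii) holds. The analyticity hypotheses on $\partial P$ and $B|_{\partial P}$ are only needed to guarantee regularity of the vacuum extension. The proposition itself needs only smooth $Y,Z$, so I would not invoke analyticity further.

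\textbf{Step 4: nonvanishing $\Leftrightarrow$ (i).} It remains to show that nonvanishing of $B_{\operatorname{eff}}$ on $\Sigma$ is equivalent to all field lines of $j_p=B_{\operatorname{eff}}\times\mathcal{N}$ being closed poloidal loops. If some field line is a closed loop through every point, then $j_p$ has no zeros, and $|j_p|=|B_{\operatorname{eff}}|$ since $B_{\operatorname{eff}}$ is tangent. Conversely, if $j_p$ is nowhere zero, it is a nonvanishing div-free field on the torus. I then invoke \cite[Lemma B.1]{Ger25NonoptimalHelicityArXiv} as in \Cref{Section4}, using that $\overline{Q}(j_p)=0$ from Step 2 of the proof of \Cref{S2T7} while $\overline{P}(j_p)=\left(\int_{\sigma_t}B\right)\overline{P}(\gamma_t\times\mathcal{N})\neq0$ by \Cref{AppBR1}. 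The intended mechanism is that a nonvanishing area-preserving flow on $T^2$ with zero toroidal winding must have all orbits closed and poloidal.

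\textbf{Main obstacle.} I expect Step 4's converse direction to be hardest. One must check that the cited lemma applies exactly under vanishing average toroidal winding, excluding irrational dense orbits and Reeb-type components. Failing a direct citation, I would use a global first integral $\psi$ with $j_p=\nabla_\Sigma\psi\times\mathcal{N}$ plus a harmonic part. The vanishing $\overline{Q}$ forces the harmonic part to be purely poloidal, so $\psi$ descends to a circle-valued function whose level sets are the poloidal field lines.
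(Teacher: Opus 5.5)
Your proposal is correct and follows essentially the same route as the paper: you apply \Cref{AppCP1} to $B_{\operatorname{eff}}$ on the exterior collar, where it is div- and curl-free, so (b) holds with $Z=0$ and (a) is exactly (ii) up to the sign of $Y$. You then use the nonzero toroidal circulation together with $\int_{\widetilde{\sigma}_t}\widetilde{\Gamma}_p=0$ to rule out $B_{\operatorname{eff}}|_{\Sigma}\equiv 0$, and conclude via $\overline{Q}(j_p)=0$, $\overline{P}(j_p)\neq 0$ and \cite[Lemma B.1]{Ger25NonoptimalHelicityArXiv}. Your explicit remarks that $\nabla f=0$ when $B\parallel\Sigma$ and that $Y$ must be replaced by $-Y$ only make precise what the paper leaves implicit.
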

\begin{proof}[Proof of \Cref{AppCC3}]
	\underline{(i) $\Rightarrow$ (ii):} (i) implies that $j_p$ is no-where vanishing, since otherwise it would admit a field line (a stationary point) which is not a poloidal loop. Since $j_p=B_{\operatorname{eff}}\times\mathcal{N}$ and $B_{\operatorname{eff}}\parallel \Sigma$ because $B\parallel \Sigma$ by assumption, we find that $B_{\operatorname{eff}}$ is no-where vanishing on $\Sigma$. We observe then that $B$, as well as $\widetilde{\Gamma}_p$ are defined on $U\setminus \Omega$ and are curl- and div-free on this set. It then follows from \Cref{AppCP1} that condition (ii,a) of \Cref{AppCP1} holds. Since $\operatorname{div}(B_{\operatorname{eff}})=0$ we conclude that condition (ii) of \Cref{AppCC3} must hold.
	\newline
	\newline
	\underline{(ii) $\Rightarrow$ (i):} By assumption $\int_{\sigma_t}B\neq 0$ for a toroidal loop $\sigma_t$ on $\partial P$. Since $B$ is curl-free within $U\setminus \overline{P}$ and $\sigma_t$ is not contractible within $\Omega$, there exists a toroidal loop $\widetilde{\sigma}_t$ on $\partial\Omega=\Sigma$ such that $\int_{\widetilde{\sigma}_t}B=\int_{\sigma_t}B\neq 0$. At the same time we have $\int_{\widetilde{\sigma}_t}\widetilde{\Gamma}_p=0$, which follows from the representation formula (\ref{AppAE2}). We conclude that $\int_{\widetilde{\sigma}_t}B_{\operatorname{eff}}\neq 0$ and thus $B_{\operatorname{eff}}$ is not identically zero on $\Sigma$. Since $B$ and $\widetilde{\Gamma}_p$ are div- and curl-free on $U\setminus \overline{\Omega}$ it follows from assumption (ii) that $B_{\operatorname{eff}}$ satisfies condition (ii) of \Cref{AppCP1}. Therefore \Cref{AppCP1} implies that $B_{\operatorname{eff}}$ is no-where vanishing on $\Sigma$ and thus $j_p$ also does not vanish on $\Sigma$. We have further shown at the end of the proof of \Cref{S2T7} and \Cref{AppBR1} that $\overline{Q}(j_p)=0$ and $\overline{P}(j_p)\neq 0$ (because $\int_{\sigma_t}B\neq 0$) where $\overline{Q}(j_p)$ and $\overline{P}(j_p)$ denote the average toroidal and poloidal windings respectively. It then follows from \cite[Lemma B.1]{Ger25NonoptimalHelicityArXiv} that all field lines of $j_p$ are poloidal loops on $\Sigma$.
\end{proof}
\bibliographystyle{plain}
\bibliography{mybibfileNOHYPERLINK}
\footnotesize
\end{document}